\def\cov{{\mathbb Cov}}
\def\expect{{\mathbb  E}}
\def\var{{\mathbb Var}}
\def\Pr{{\mathbb P}}
\def\eqdef{\triangleq}
\def\12{\frac{1}{2}}
\def\dist{\raise.17ex\hbox{$\scriptstyle\sim$}}
\newfont{\bbb}{msbm10 scaled 500}
\newfont{\bb}{msbm10 scaled 1100}
\newcommand{\dv}{{\bf d}}
\newcommand{\kv}{{\bf k}}
\newcommand{\lv}{{\bf l}}
\newcommand{\ov}{{\bf o}}
\newcommand{\sv}{{\bf s}}
\newcommand{\Lc}{{\cal L}}
\newcommand{\Nc}{{\cal N}}
\newcommand{\etav}{\hbox{\boldmath$\eta$}}
\newcommand{\phiv}{\hbox{\boldmath$\phi$}}
\newtheorem{theorem}{Theorem}
\newtheorem{lemma}{Lemma}%[chapter]
\newtheorem{definition}{Definition}%[chapter]
\author{
Onur Gungor, Fangzhou Chen, C. Emre Koksal \\
\begin{tabular} {c}
\small Department of Electrical and Computer Engineering\\
\small The Ohio State University, Columbus, 43210\\
\end{tabular}
\thanks{
The authors are with the Department of Electrical and Computer Engineering,
The Ohio State University, Columbus, OH, 43210.
This work was
in part presented in the Workshop on Physical Layer Security \cite{conference}, Globecom 2011.

This work is supported in part by QNRF under grant NPRP 5-559-2-227,
and by NSF under grants CNS-1054738, CNS-0831919, CCF-0916664.
}
}
\title{Secret Key Generation Via Localization and Mobility}
\begin{document}
\maketitle

%%%%%%%%%%%%%%%%%%%%%%%%%%%%%%%%%%%%%%%%%%%%%%%%%%%%%%%%%%%%%%%%%%%%%%%%%%%%%%
%%%%%%%%%%%%%%%%%%%%%%%%%%%%%%%%%%%%%%%%%%%%%%%%%%%%%%%%%%%%%%%%%%%%%%%%%%%%%%

\begin{abstract}
We consider secret key generation from relative
localization information of a pair of nodes in a mobile wireless
network in the presence of a mobile eavesdropper. Our
problem can be categorized under the source models of information
theoretic secrecy, where the distance between the legitimate
nodes acts as the observed common randomness. We characterize
the theoretical limits on the achievable secret key bit rate, in terms of the observation
noise variance at the legitimate nodes and the eavesdropper. This
work provides a framework that combines information theoretic
secrecy and wireless localization, and proves that the localization
information provides a significant additional resource for secret
key generation in mobile wireless networks.
\end{abstract}

%%%%%%%%%%%%%%%%%%%%%%%%%%%%%%%%%%%%%%%%%%%%%%%%%%%%%%%%%%%%%%%%%%%%%%%%%%%%%%
%%%%%%%%%%%%%%%%%%%%%%%%%%%%%%%%%%%%%%%%%%%%%%%%%%%%%%%%%%%%%%%%%%%%%%%%%%%%%%

\section{Introduction}

We consider the generation of a common key in a pair of nodes, which move in $\mathbb{R}^2$ (continuous space)
 according to a stochastic mobility model. We exploit the reciprocity of the distance between a 
given pair of locations, view the distance between the legitimate nodes as a common randomness shared by these 
nodes and utilize it to generate secret key bits using the ideas from source models of secrecy \cite{Maurer}.

Unlike the recent plethora of studies (see Section~\ref{sec:relatedwork} for a brief list of related papers) that focuses on
wireless channel reciprocity, a variety of technologies can be used for localization  (e.g., ultrasound, infrared, Lidar, Radar, wireless radios),
which makes distance reciprocity an \emph{additional resource} for generating secret key bits.
Such versatility makes the key generation systems more robust, since different technologies may have different capabilities that wireless RF does not have. 
For instance, narrow beam width of infrared systems would make them 
less susceptible to eavesdropping from different angles.
Distance reciprocity is highly robust, since the distance measured between any pair of points is identical, regardless of which point the measurement originates. (e.g., when there is no line-of sight, or
when different frequency bands are used each way).
Yet, there are various challenges in obtaining reciprocal distance measurements. 

In this paper, we analyze the theoretical limits of key generation using localization in the following system. We assume that mobile nodes
 obtain observations regarding the sequence of distances between them over a period of 
time as they move in the area. 
The measurements can be obtained actively through exchange of wireless radio, ultrasound, infrared beacons, or passively by processing existing video images, etc.
The beacon signal may contain explicit information such as a time stamp, or the receiving node can extract other
 means of localization information by analyzing angle of arrival, received signal strength, etc. 
The nodes perform localization based on the observations of distances, and the statistics of the mobility model, and obtain
estimates of their relative locations with respect to each other. 
Then, the nodes
communicate over the public channel to agree on a secret key. 
The generated key bits satisfy the following three quality measures: i) reliability, ii) secrecy, and iii) randomness.
For reliability, we show that the probability of mismatch between the keys generated by the legitimate nodes decays to $0$ with increasing block length.
In our attacker model, we consider a passive eavesdropper, that 
overhears the exchanged beacons in the first phase, and the public discussion in the third phase, and 
tries to deduce the generated key based
solely on these observations. 
The attacker can follow various mobility strategies in order to enhance its position statistically to reduce the
 achievable key rate (possibly to $0$). We assume that the attacker does not actively interfere with the observation phase,
 e.g., by injecting jamming signals, etc., in order not to reveal its presence.
For secrecy, we consider Wyner's notion, i.e., the rate at which mutual information on the key leaks to the eavesdropper should be arbitrarily low.
For randomness, the generated key bits have to be perfectly compressed, i.e., the entropy should be equal to the number of bits it contains.

We mainly focus on information theoretic limits. 
Using a source model of secrecy \cite{Maurer}, we characterize the achievable secret key bit rate in terms of observation noise parameters 
at the legitimate nodes and the eavesdropper under two different cases of global location information (GLI): (i) No GLI, 
in which the nodes do not observe their global locations directly, and (ii) perfect GLI, in which nodes have perfect observation 
of their global locations, through a GPS device, for example. 
While the bounds we provide are general for a large set of observation statistics,  we 
 further investigate the scenario in which
the observation noise is i.i.d. Gaussian for all nodes: We study the observation SNR asymptotics, and
 show a phase-transition phenomenon for the key rate. In particular, we prove that 
the secret key rate grows unboundedly as the observation noise variance decays, if the eavesdropper does not obtain the angle of arrival observations. 
Otherwise, it is not possible to increase the secret key rate beyond a certain limit.
Then, we evaluate the theoretical performance numerically for a simple grid-type model, 
as a function of beacon power.
We also evaluate the performance for the case where the eavesdropper 
strategically changes its location to reduce the secret key rate.
Specifically, we consider the strategy where the eavesdropper
moves to the middle of its location estimates of the legitimate nodes.
We show that with this strategy, the eavesdropper can significantly reduce the secret key rate
compared to the case where it follows a random mobility pattern.

In summary, our main contribution is
to illustrate that relative localization information can be used as an additional resource 
for secret key generation (see Section~\ref{sec:relatedwork} for a comparison with related work).

\color{black}
%\vspace{-0.05in}
\subsection{Related Work}\label{sec:relatedwork}
\label{sec:related}
\vspace{-0.05in}

Generation of secret key from relative localization information can be categorized under 
source model of information theoretic secrecy, which studies generation of secret key bits 
from common randomness observed by legitimate nodes.
In his seminal paper \cite{Maurer}, Maurer showed that, if two nodes observe 
correlated randomness, then they can
agree on a secret key through public discussion.
He provided upper and lower bounds on the achievable secret key rates. 
Although the bounds have been improved later \cite{Ahlswede:93,Maurer:99}, the secret key capacity of the source model in general is
still an open problem. Despite this fact, the source model has been utilized in several different settings \cite{Csiszar:08,Maurer:03,Csiszar}.

There is a vast amount of literature on localization (see, e.g., \cite{Gezici,Shen1} for 
wireless localization, \cite{infrared} for infrared localization,
 and \cite{ultrasound} for  ultrasound localization).
There has been some focus on secure localization and position-based cryptography~\cite{buhrman,srinivasan,poovendran,chandran}, 
however, these works either consider key generation in terms of other forms of secrecy (i.e., computational secrecy), 
or fall short of covering a complete information theoretic analysis.

A similar line of work in wireless network secrecy considers channel
identification~\cite{Tse} for secret key generation using wireless radios.
Based on the channel reciprocity assumption, nodes at both ends experience the same channel,
corrupted by independent noise. Therefore, nodes can use their channel magnitude and phase response observations to generate secret key bits from
public discussion. 
The literature on channel identification based secret key generation is vast.
The works \cite{channel_id1,channel_id2,channel_id3,channel_id4,channel_id5,channel_id6} study key generation with on-the-shelf devices, under 802.11 development platform
using a two way radio signal exchange on the same frequency.
\cite{proximate}, on the other hand, utilizes the fact that fading is highly correlated on locations that are less than a half wavelength apart,
instead of exploiting the reciprocity. Therefore,
very close nodes can use public radio signals (e.g., FM, TV, WiFi) to generate secret key bits.

In most of these works, the security analysis is based on the assumption that the channel gains are modeled as random processes, 
that are independent of the distances between the nodes, and are independent at locations that are more than a few wavelengths apart.
While being appropriate for a non line-of-sight and highly dynamic media, these models do not capture wireless propagation in environments where attenuation is a function of the propagation distance.
 In such environments, an attacker that has some localization capabilities will gain a statistical advantage by estimating the channel gains based on its distance observations. If the key generation process ignores this advantage, part of the key may be recovered by the attacker and thus the key cannot be perfectly secure. For instance, Jana et. al. \cite{channel_id2} focuses on a scenario in which
secret key bits based on the received signal strength (RSSI), and show that an eavesdropper that knows the
location of the legitimate nodes can launch a mobility attack to force the legitimate nodes to generate deterministic key bits, by periodically blocking and un-blocking
their line-of-sight. Similarly, if the eavesdropper is close (less than a wavelength) to one of the legitimate nodes, then eavesdropper will obtain correlated
information \cite{proximate},
therefore the generated key will not be \emph{perfectly secure}, and secrecy outage occurs.
The practical applicability of exploiting channel reciprocity for secret key generation has also been questioned 
recently in \cite{channel_id7}. It is shown that, especially when the nodes have
sufficient mobility, the eavesdropper's and the legitimate receiver's channel can be significantly correlated depending on the locations, which breaks the secrecy of the initial generated key.

On the other hand,
key generation based on locations does not make such independence assumptions. The dependencies in the locations of the legitimate nodes and the observations of the attacker with those of the legitimate nodes are taken into account to provide \emph{provably security} against a mobile eavesdropper
with localization capability.
Thus, the insights provided in this paper can also be valuable for the class of studies on key generation based on wireless channel reciprocity, as we show how one should capture a variety of capabilities of the attackers in finding the correct rate for the key and in designing the appropriate mechanisms to generate a truly secret key. 

A word about notation: We use $[x]^+ = \max(0,x)$ and  $\|.\|$ denotes the L2-norm.
A brief list of variables used in the paper can be found in Table~\ref{tab:variables}.

%%%%%%%%%%%%%%%%%%%%%%%%%%%%%%%%%%%%%%%%%%%%%%%%%%%%%%%%%%%%%%%%%%%%%%%%%%%%%%
%%%%%%%%%%%%%%%%%%%%%%%%%%%%%%%%%%%%%%%%%%%%%%%%%%%%%%%%%%%%%%%%%%%%%%%%%%%%%%

\section{System Model}
\label{sec:Systemmodel}
%%%%%%%%%%%%%%%%%%%%%%%%%%%%%%%%%%%%%%%%%%%%%%%%%%%%%%%%%%%%%%%%%%%%%%%%%%%%%%

\subsection{Mobility Model}
We consider a simple network consisting of two mobile legitimate nodes, called user $1$ and $2$, and
a possibly mobile eavesdropper $e$. We divide time uniformly into $n$ discrete slots.
%, where slot $i$ covers the time interval $[iI,~(i+1)I)$.
Let $l_j[i] \in \Lc$ be the random variable that denotes the coordinates of the location of node $j \in \{1,2,e\}$ in slot $i\in \{1,\cdots, n\}$,
where nodes are restricted to the field $\Lc \subset \mathbb{R}^2$.
We use the boldface notation
$\lv_j = \{ l_j[i]\}_{i=1}^n$, to denote the $n$-tuple location vectors for $j \in \{ 1, 2, e \}$.
The distance between nodes $1$ and $2$ in slot $i$ is ${d}_{12}[i] = \|l_{1}[i]-l_2[i]\|$.
Similarly, ${d}_{1e}[i]$ and ${d}_{2e}[i]$ denote the sequence of distances between nodes $(1,e)$ and nodes $(2,e)$ respectively.
We use the boldface notation $\dv_{12}$, $\dv_{1e}$, $\dv_{2e}$ for the $n$-tuple distance vectors.
Note that, in any slot the nodes form a triangle in $\mathbb{R}^2$, as depicted in Figure~\ref{fig:sysmodel},
where $\phi_{12}[i]$ , $\phi_{21}[i]$, $\phi_{1e}[i]$, $\phi_{2e}[i]$
denote the angles with respect to some coordinate axis.
\begin{figure}[ht]
    \centering
     \includegraphics[scale =0.7]{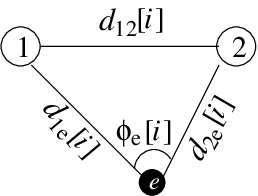}
    \caption{Legitimate nodes and the eavesdropper form a triangle.}
     \label{fig:sysmodel}
\end{figure}
We assume that the distances $d_{12}[i], \ d_{1e}[i], \ d_{2e}[i]$ take values in the interval $[d_{\min}~d_{\max}]$,
since the nodes cannot be closer to each other than $d_{\min}$ due to physical restrictions, and they cannot be further than $d_{\max}$
away from each other due to their limited communication range.
We assume that the location vectors $\lv_1,\lv_2,\lv_e$ are ergodic processes.
We will use the notation $\sv \eqdef [\lv_1,\lv_2,\lv_e]$ to summarize the state variables
related to mobility in the system. Note that $s[i] \in \Lc^3 = \Lc \times \Lc \times \Lc$ for any $i$
\footnote{It is not necessary to use absolute coordinates for $\lv_1,\lv_2,\lv_e$. For example, 
when global locations are not available at the nodes, we may assume that node $1$ is at the origin, i.e., $l_1[i] = [0~0]$ for all $i$}.

%%%%%%%%%%%%%%%%%%%%%%%%%%%%%%%%%%%%%%%%%%%%%%%%%%%%%%%%%%%%%%%%%%%%%%%%%%%%%%
\begin{table}[htbp]
  \centering
  \caption{List of variables}
    \begin{tabular}{lll}
    \toprule
    var. & Description  \\
    \midrule
    $n$      &  number of slots \\
    $T$      &  number of steps in public discussion \\
    $d_{ij}$ &  distance between nodes $i$ and $j$ \\
    $\Lc \in \mathbb{R}^2$    & the field where nodes are located \\
    $l_j$    &  2-D location of node $j$  \\
    $\phi_{ij}$ & angle between nodes $i$ and $j$ \\
    $\hat{d}_j, \hat{\phi}_j$ & observation of nodes $j \in \{1,2\}$ of $d_{12}$ and $\phi_j$ \\
    $\hat{d}_{je}$ & observation of node $e$ of $d_{je}$ and $\phi_{je}$ \\
    $o_j$ & complete observations of node $j$ based on available GLI \\
	$s$ & location triple $[l_1,l_2,l_e]$ \\
    $s^{\Delta}$ & quantized version of location triple, $[l_1^{\Delta},l_2^{\Delta},l_e^{\Delta}]$ \\
    $\Delta$ & quantization resolution \\
    $\psi$   & uniform $2-D$ quantization function \\
    $\tilde{s}_j^{\Delta}$ & $[\tilde{l}_{1,j}^{\Delta},\tilde{l}_{2,j}^{\Delta},\tilde{l}_{e,j}^{\Delta}]$ \\
    $\tilde{l}_{k,j}^{\Delta}$ & node $j$'s estimate of $l_k^{\Delta}$ based on all its information \\
    $\kappa(.,m)$ & $m$-bit Gray coder \\
    $v_j$  & obtained binary key at node $j$ before reconciliation \\
	$u_j$ & obtained binary key at node $j$ after reconciliation  \\
	$q_j$ & obtained binary key at node $j$ after universal compression  \\
	$k_j$  & final key at node $j$ after universal hashing \\	
    \bottomrule
    \end{tabular}%
  \label{tab:variables}%
\end{table}%
%%%%%%%%%%%%%%%%%%%%%%%%%%%%%%%%%%%%%%%%%%%%%%%%%%%%%%%%%%%%%%%%%%%%%%%%%%%%%%

\subsection{Localization}\label{s:localization}

At each time slot, there is a period in which the legitimate nodes 
obtain information about their relative position with respect to each other.
As discussed in Section~\ref{sec:relatedwork}, there are various methods to establish
the localization information. 
In this paper, we will not treat these methods separately. We will simply assume that,
during measurement period $i$, when node $1$ transmits a beacon, 
nodes $2$ and $e$ obtain a noisy observation of $d_{12}[i]$ and $d_{1e}[i]$
respectively. Let these observation be $\hat{d}_{2}[i]$ and $\hat{d}_{1e}[i]$, respectively. 
%Furthermore, nodes $2$ and $e$ obtain noisy observation of the angles ${\phi}_{2}[i]$ and ${\phi}_{1e}[i]$, denoted as
% $\hat{\phi}_{2}[i]$ and $\hat{\phi}_{1e}[i]$, respectively.
Similarly, when node $2$ follows up with a beacon, nodes $1$ and $e$ obtain the distance observations $\hat{d}_{1}[i]$ and $\hat{d}_{2e}[i]$,
%and the angle observations  $\hat{\phi}_{1}[i]$ and $\hat{\phi}_{2e}[i]$,
respectively. 
The nodes may also independently observe their global positions, e.g., through a GPS device. 
They may also observe the angle they make with respect to each other, if they are equipped with direction sensitive localizers (e.g., directional antennas in wireless localization).
We consider two extreme cases on the global location information (GLI):\\
1) no GLI: The nodes do not have any knowledge of their global location. 
However,  with the observations of both the beacons, the eavesdropper also obtains a noisy
observation, $\hat{\phi}_e[i]$, of the angle between the legitimate nodes.\\
2) perfect GLI: Each node has perfect knowledge of its global location, and
a sense of orientation with respect to some
coordinate plane as shown in Figure~\ref{fig:sysmodel2}. In this case,
nodes $1$, $2$ obtain noisy observations $\hat{\phiv}_1, \ \hat{\phiv}_2$
of the angle $\phiv_{12}$. Similarly, node $e$ obtains noisy observation
$\hat{\phiv}_{1e}, \ \hat{\phiv}_{2e}$ of the angles $\phiv_{1e}, \phiv_{2e}$.

Let $o_j[i]$ denote the set of observations of node during slot $i$, and
 $\ov_j \eqdef \{o_j[i]\}_{i=1}^n$. The observations $\ov_j$ for each case is provided in Table~\ref{tab:observations}.
We emphasize that, the observations in each slot are obtained  solely from the beacons exchanged 
during that particular slot. The nodes' final estimates of the distances depend also
on the observations during other slots, due to predictable mobility patterns. 
%%%%%%%%%%%%%%%%%%%%%%%%%%%%%%%%%%%%%%%%%%%%%%%%%%%%%%%%%%%%%%%%%%%%%%%%%%%%%%
\begin{figure}[ht]
    \centering
     \includegraphics[scale = 0.7]{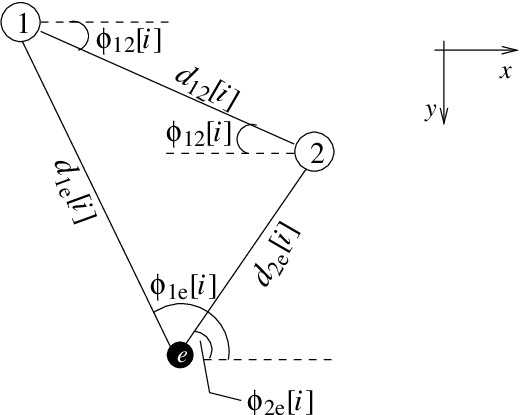}
    \caption{With GLI, the nodes obtain noisy observations of the relative orientation of each other with respect to the x-axis.}
     \label{fig:sysmodel2}
\end{figure}
\begin{table}[htbp]
  \centering
  \caption{Nodes' Observations}
    \begin{tabular}{l|ll}
    \toprule
    & No GLI & Perfect GLI  \\ %& Full GLI \\
    \midrule
    $o_1[i]$ & $[\hat{d}_1[i]]$ & $[\hat{d}_1[i],\hat{\phi}_1[i],l_1[i]]$  \\
% $o_1[i] = [\hat{d}_1[i],\hat{\phi}_1[i],l_1[i],l_e[i]]$ \\
    $o_2[i]$ & $[\hat{d}_2[i]]$ & $[\hat{d}_2[i],\hat{\phi}_2[i],l_2[i]]$ \\
% $o_2[i] = [\hat{d}_2[i],\hat{\phi}_2[i],l_2[i],l_e[i]]$ \\
    $o_e[i]$ & $[\hat{d}_{1e}[i],\hat{d}_{2e}[i],\hat{\phi}_e[i]]$ & $[\hat{d}_{1e}[i],\hat{d}_{2e}[i],\hat{\phi}_{1e}[i], \hat{\phi}_{2e}[i], l_e[i]]$  \\
% & $o_e[i] = [\hat{d}_{1e}[i],\hat{d}_{2e}[i],\hat{\phi}_e[i],l_e[i]]$ \\
    \bottomrule
    \end{tabular}%
  \label{tab:observations}%
\end{table}%
%%%%%%%%%%%%%%%%%%%%%%%%%%%%%%%%%%%%%%%%%%%%%%%%%%%%%%%%%%%%%%%%%%%%%%%%%%%%%%
\subsection{Attacker Model}\label{sec:attacker}

We assume that there exists a passive eavesdropper $e$, which does not transmit any beacons.
However, node $e$ can strategically change its location to
obtain a geographical advantage against the legitimate nodes. Overall, we consider two strategies:\\
\textbf{Random Mobility:} Eavesdropper moves randomly, without a regard 
to the location of the legitimate nodes. 
We will assume
that eavesdropper adopts random mobility unless otherwise stated. \\
\textbf{Mobile Man in the Middle:}
Node $e$ \emph{controls} its mobility, such that it can move accordingly 
to obtain a geographic advantage
compared to legitimate nodes. 
We consider the strategy where node $e$ moves to the mid-point of
its maximum likelihood estimates of the legitimate nodes' locations. 
For $j\in\{1,2\}$, let us denote node $e$'s maximum likelihood estimate of node $j$'s location
at slot $i$, based on its observations \emph{up to slot} $i-1$ as $\tilde{l}_{j,e}[i]$. Then,
$$\tilde{l}_{j,e}[i] = \arg\max_{l_j[i] \in \Lc} \Pr(l_j[i] | o_e[1],\ldots,o_e[i-1])$$
In other words, node $1$ and node $2$'s locations at slot $i$ is predicated by node $e$
by its observations in the previous slots. 
Then, at the beginning of each slot $i$, node $e$ moves to the mid-point of
the estimates, which is
$(\tilde{l}_{1,e}[i] + \tilde{l}_{2,e}[i])/2$.

Although we restricted ourselves to a single passive eavesdropper,
we also discuss the implications of multiple eavesdroppers.
The eavesdroppers may utilize their observations in two possible ways:
(i) \emph{Non-colluding eavesdroppers} do not communicate, or share their observations with each other,
whereas (ii) \emph{colluding eavesdroppers} combine their measurements to obtain less noisy measurements.
Note that, an eavesdropper with multiple location sensors 
(e.g., multiple antennas in the case of wireless radio-based localization) is a special case of colluding eavesdroppers, as
each sensor could be viewed as a separate eavesdropper, with perfect links between them.
Theoretical secret key capacity under colluding eavesdropper scenario is lower, due to cooperation of the eavesdroppers,
as discussed in Section~\ref{sec:keyrates}.

%also consider the theoretical implications of multiple eavesdroppers in secret key
%capacity in Section~\ref{.}, and consider

%%%%%%%%%%%%%%%%%%%%%%%%%%%%%%%%%%%%%%%%%%%%%%%%%%%%%%%%%%%%%%%%%%%%%%%%%%%%%%

\subsection{Notion of security}

We consider the typical definition of source model of 
information theoretic secrecy under a passive eavesdropper:
We assume that there exists an authenticated error-free 
public channel, using which the legitimate nodes can communicate to agree on secret keys, based on the
observations of the distances and angles ($\ov_1$ and $\ov_2$) obtained during beacon exchange.
This process, commonly referred to as public discussion~\cite{Maurer},
is a $T$ step message exchange protocol,
where at any step $t \in \{1,\cdots,T\}$, node $1$ sends message $C_1[t]$, and node $2$ replies back with message $C_2[t]$
such that, for $t>1$,
\begin{align}
H\big{(}C_1[t]|{\ov}_{1},\{C_{1}[i]\}_{i=1}^{t-1},\{C_{2}[i]\}_{i=1}^{t-1}\big{)} &= 0,\mbox{ odd }t \label{eq:ci1}\\
H\big{(}C_2[t]|{\ov}_{2},\{C_{1}[i]\}_{i=1}^{t},\{C_{2}[i]\}_{i=1}^{t-1}\big{)} &= 0,\mbox{ even }t. \label{eq:ci2}
\end{align}
 At the end of the $T$ step protocol, node $1$ obtains $\kv_1$,
and node $2$ obtains $\kv_2$ as the secret key, where
\begin{align}
H\left(\kv_j|{\ov}_j,\{C_1[t],C_2[t]\}_{t=1}^T\right) &= 0,~ j \in \{1,2\}. \label{eq:si}
\end{align}
\begin{definition}\label{def:security}
Secret key bits are generated (with respect to the described attacker model) at rate $R$, if, for all $\epsilon >0$ and $\delta>0$, 
there exists some $n,\ T >0$ such that \eqref{eq:ci1}, \eqref{eq:ci2} and \eqref{eq:si} are satisfied, and
\begin{align}
H(\kv_j)/n &= R,~ j \in \{1,2\}  \label{randomness} \\
\Pr(\kv_1 \neq \kv_2) &\leq \epsilon \label{reliability}\\
I(\kv_j;{\ov}_e,\{C_1[t],C_2[t]\}_{t=1}^T)/n &\leq \delta,~ j \in \{1,2\}. \label{equivocation}
\end{align}
\end{definition}
Here, \eqref{randomness}-\eqref{equivocation} correspond to perfect randomness, reliability and 
security constraints, respectively.
The schemes proposed in the literature typically use a random coding structure,
where $\{C_1[t],C_2[t]\}_{t=1}^T$ are generated by using a binning strategy~\cite{Maurer}-\cite{Csiszar}.
In Section~\ref{sec:keyrates}, we will make use of these existing results to provide computable theoretical
bounds on the achievable key rates. 

%%%%%%%%%%%%%%%%%%%%%%%%%%%%%%%%%%%%%%%%%%%%%%%%%%%%%%%%%%%%%%%%%%%%%%%%%%%%%%%%%%%%%%%%%%
%%%%%%%%%%%%%%%%%%%%%%%%%%%%%%%%%%%%%%%%%%%%%%%%%%%%%%%%%%%%%%%%%%%%%%%%%%%%%%%%%%%%%%%%%%

\section{Theoretical Performance Limits}\label{sec:keyrates}
In this section, we provide information theoretical bounds on the achievable key rate with perfect reliability. To evaluate these bounds, we assume an idealized system by ignoring the issues associated with quantization, cascade reconciliation protocol, and privacy amplification. Thus, these bounds are valid for \emph{any} key generation scheme that satisfies Definition~\ref{def:security}.
\begin{theorem} \label{t:theoreticalbounds}
A lower bound $R_L$, and an upper bound $R_U$ on the perfectly-reliable key rate achievable
through public discussion are
%%%%%%%%%%%%%%%%%%%%%%%%%%%%%
\begin{multicolumn}
\begin{align}
&R_L =
\max\bigg{\{}\lim_{n \to \infty}\frac{1}{n}\left[I({\ov}_{1};{\ov}_{2})-  I({\ov}_{1}; \ov_e)\right]^+,  \nonumber\\
 &\qquad \lim_{n \to \infty}\frac{1}{n} \left[I(\ov_{2};\ov_{1})- I(\ov_{2};{\ov}_e) \right]^+ \bigg{\}} \label{RL}
\end{align}
\begin{align}
R_U =
\lim_{n \to \infty}\frac{1}{n}\min\left\{I(\ov_{1};\ov_{2}),
I(\ov_{1};\ov_2|\ov_e)\right \}  \label{RU}
\end{align}
\end{multicolumn}
%%%%%%%%%%%%%%%%%%%%%%%%%%%%%
%%%%%%%%%%%%%%%%%%%%%%%%%%%%%
\begin{singlecolumn}
\begin{align}
&R_L =
\max\bigg{\{}\lim_{n \to \infty}\frac{1}{n}\left[I({\ov}_{1};{\ov}_{2})-  I({\ov}_{1}; \ov_e)\right]^+,  
  \lim_{n \to \infty}\frac{1}{n} \left[I(\ov_{2};\ov_{1})- I(\ov_{2};{\ov}_e) \right]^+ \bigg{\}} \label{RL} \\
  &R_U =
\lim_{n \to \infty}\frac{1}{n}\min\left\{I(\ov_{1};\ov_{2}),
I(\ov_{1};\ov_2|\ov_e)\right \}  \label{RU}
\end{align}
\end{singlecolumn}
%%%%%%%%%%%%%%%%%%%%%%%%%%%%%
respectively, where $\ov_1, \ov_2$ and $\ov_e$
are as given in Table~\ref{tab:observations} for different possibilities of GLI.
\end{theorem}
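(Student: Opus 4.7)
The plan is to recognize Theorem~\ref{t:theoreticalbounds} as the natural adaptation of the classical source-model bounds of Maurer \cite{Maurer} and Ahlswede--Csisz\'ar \cite{Ahlswede:93} to our non-i.i.d. setting, where the correlated triple $(\ov_1,\ov_2,\ov_e)$ is produced by the joint beacon/mobility/observation-noise process rather than by an i.i.d.\ source. Because our mobility process is assumed ergodic and the observation noise acts memorylessly on top of it, the per-slot mutual information rates on the right-hand side of \eqref{RL} and \eqref{RU} exist as limits, and we only need to convert a single ``super-letter'' application of the classical bounds into a rate by dividing by $n$ and letting $n\to\infty$.

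For the lower bound, I would treat the length-$n$ vector $(\ov_1,\ov_2,\ov_e)$ as one realization of a discrete memoryless source and apply Maurer's one-way public-discussion protocol in the direction $1\to 2$: node $1$ randomly bins its observation $\ov_1$, communicates the bin index on the public channel, and node $2$ decodes jointly with $\ov_2$; privacy amplification via universal hashing then extracts a key of size essentially $H(\ov_1)-H(\ov_1|\ov_2)-H(\ov_1|\ov_2,\ov_e,\text{discussion}) \approx I(\ov_1;\ov_2)-I(\ov_1;\ov_e)$ bits, satisfying \eqref{randomness}--\eqref{equivocation}. Running the symmetric protocol in the $2\to 1$ direction gives the other argument of the max, and the $[\cdot]^+$ is free because a zero-rate key is always achievable. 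Dividing the achievable key length by $n$ and invoking the ergodicity of the mobility process (so that $\tfrac{1}{n}I(\ov_1;\ov_2)$ and $\tfrac{1}{n}I(\ov_1;\ov_e)$ converge) yields \eqref{RL}.

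For the upper bound, I would again treat $(\ov_1,\ov_2,\ov_e)$ as a single draw and use the two standard converse arguments. The bound $R\le \tfrac{1}{n}I(\ov_1;\ov_2)$ follows from a data-processing style manipulation: any key $\kv_1$ that node $2$ can reconstruct reliably from $(\ov_2,\{C_1[t],C_2[t]\})$ satisfies $H(\kv_1)\le I(\kv_1;\ov_2\mid\{C_1,C_2\})+o(n)$ by Fano, and since the public messages are deterministic functions of the observations (by \eqref{eq:ci1}--\eqref{eq:ci2}) one gets $H(\kv_1)\le I(\ov_1;\ov_2)+o(n)$. The sharper bound $R\le \tfrac{1}{n}I(\ov_1;\ov_2\mid\ov_e)$ is Maurer's intrinsic-information upper bound: using \eqref{equivocation}, Fano's inequality, and the Markov chain induced by the public discussion being conditionally independent of $\kv_j$ given the observations, one shows $H(\kv_1)\le I(\ov_1;\ov_2\mid\ov_e)+n\delta+o(n)$. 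Normalizing by $n$ and letting $\epsilon,\delta\to 0$ gives \eqref{RU}.

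The main obstacle will not be the information-theoretic manipulations themselves, which mirror \cite{Maurer,Ahlswede:93}, but rather justifying that the standard i.i.d.\ source-model results transfer to our setting, where correlations within $\ov_j$ across slots are non-trivial due to mobility memory. This is handled cleanly by the super-letter device, provided the mobility is ergodic so that the normalized mutual informations converge; I would state this as a lemma and cite the analogous extension in \cite{Csiszar:08,Csiszar}. A secondary subtlety is that the eavesdropper's observation includes its own mobility, which is assumed independent of the key generation protocol (passive attacker, Section~\ref{sec:attacker}); this independence ensures that $\ov_e$ plays exactly the role of the wiretap side information in Maurer's model and nothing more.
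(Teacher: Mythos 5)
Your converse sketch is fine: Maurer's upper-bound argument (Fano plus the fact that the public messages are deterministic functions of the observations, via \eqref{eq:ci1}--\eqref{eq:ci2} and \eqref{eq:si}) is a multi-letter statement that applies directly to the single block $(\ov_1,\ov_2,\ov_e)$, so normalizing by $n$ and letting $n\to\infty$ does give \eqref{RU}. The gap is in the achievability half. The ``super-letter device'' does not literally work there: Maurer's lower bound is proved for many i.i.d.\ copies of the source letter, and if you declare the whole length-$n$ block to be one super-letter you possess exactly \emph{one} realization of it --- there is nothing to bin over, and the ergodic Markov mobility does not make successive blocks independent, so you cannot manufacture i.i.d.\ super-letters either. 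What the achievability actually needs is an AEP/information-spectrum property for the ergodic triple, i.e.\ convergence of the normalized information \emph{densities} (information stability), not merely the existence of the limits $\frac{1}{n}I(\ov_1;\ov_2)$ and $\frac{1}{n}I(\ov_1;\ov_e)$ that you invoke; and the lemma you propose to cite from the multiterminal/helper papers does not supply this non-i.i.d.\ extension.

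This is precisely why the paper does not reprove the bounds at all: it invokes Theorem~4 of \cite{Bloch}, which establishes general (resolvability-based) upper and lower bounds for arbitrary, possibly non-ergodic sources in terms of spectral inf/sup information rates, and then observes that because $\ov_1,\ov_2,\ov_e$ are ergodic (hence information stable) those general expressions collapse to the mutual-information rates in \eqref{RL}--\eqref{RU}, in the spirit of \cite{Verdu}. So your plan can be repaired in two ways: either cite \cite{Bloch} and \cite{Verdu} as the paper does, or replace the super-letter step by an explicit Shannon--McMillan--Breiman-type lemma for the joint mobility/observation process and redo the binning and privacy-amplification analysis with information densities in place of entropies. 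As written, the proposal asserts the transfer to the non-i.i.d.\ setting at exactly the point where the real work (or the right citation) is required.
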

The theorem follows\footnote{Theorem 4 of \cite{Bloch}
provide general upper and lower bounds including the case where the source
processes are not ergodic. In our system model, $\ov_1, \ \ov_2$ and $\ov_e$ are
ergodic processes, hence they are information stable, therefore
these lower and upper bounds reduce to \eqref{RL} and \eqref{RU}, respectively \cite{Verdu}.} from Theorem 4 in~\cite{Bloch}, which generalizes
Maurer's results on secret key generation through public discussion \cite{Maurer},
to non-i.i.d. settings. Although tighter bounds exist in the literature \cite{Ahlswede:93,Maurer:99},
we use the above bounds since they provide clearer insights into our systems due to their simplicity. 

Note that for the special case where 
the observations $(o_1[i],o_2[i],o_e[i])$
are i.i.d., we can safely drop the index $i$, and denote the joint probability density function
of observations as $f(o_1,o_2,o_e)$. Therefore, the conditioning on the past and future observations
in $R_L$ and $R_U$ disappear, and the bounds reduce to
\begin{multicolumn}
\begin{align}
R_L = &\max\bigg{(}\left[I(o_1;o_2)-I(o_1;o_e)\right]^+,  \nonumber\\
      & \left[ I(o_1;o_2)-I(o_2;o_e)\right]^+ \bigg{)} \label{RLiid} \\
R_U = & \min\left(I(o_1;o_2),I(o_1;o_2|o_e)\right),  \label{RUiid}
\end{align}
\end{multicolumn}
\begin{singlecolumn}
\begin{align}
R_L = &\max\bigg{(}\left[I(o_1;o_2)-I(o_1;o_e)\right]^+, \left[ I(o_1;o_2)-I(o_2;o_e)\right]^+ \bigg{)} \label{RLiid} \\
R_U = & \min\left(I(o_1;o_2),I(o_1;o_2|o_e)\right),  \label{RUiid}
\end{align}
\end{singlecolumn}

Also note that Theorem~\ref{t:theoreticalbounds} can be extended 
to provide key rate bounds against multiple eavesdropper models discussed in Section~\ref{sec:attacker}. 
Consider $K$ eavesdroppers, with observations $\ov_{e,1},\ldots, \ov_{e,K}$. 
For the non-colluding eavesdroppers model, 
since the eavesdroppers are not communicating, we can safely consider the most capable eavesdropper $k$. 
In other words, in \eqref{RL}, \eqref{RU} we can replace $\ov_e$ with $\ov_{e,k}$ for $k\in\{1,\ldots K\}$ which yields the lowest bounds, 
and discard the rest of the eavesdroppers.
For the colluding eavesdroppers model, we can replace the term $\ov_e$ in \eqref{RL}, \eqref{RU} with
 $\ov_{e,1},\ldots, \ov_{e,K}$ since the eavesdroppers perfectly communicate with each other.
It can be directly observed that the bounds for colluding case are lower with respect to the non-colluding case.

%%%%%%%%%%%%%%%%%%%%%%%%%%%%%%%%%%%%%%%%%%%%%%%%%%%%%%%%%%%%%%%%%%%%%%%%%%%%%%%
%%%%%%%%%%%%%%%%%%%%%%%%%%%%%%%%%%%%%%%%%%%%%%%%%%%%%%%%%%%%%%%%%%%%%%%%%%%%%%%
\section{Gaussian observations}\label{sec:Gaussian}

To obtain more insights from theoretical results in Section~\ref{sec:keyrates}, we focus on the 
following special case: First, we assume that the node locations are individually Markov processes such that
$$l_j[i-1]\rightarrow l_j[i] \rightarrow l_j[i+1],~ j \in \{1,2,e\},$$
holds for any $i$, and their joint probability density function
$f(\lv_1,\lv_2,\lv_e)$ is well defined.
Secondly, all observations of distance and angle terms are i.i.d. Gaussian processes.
This model is typically used in the literature to characterize
 observation noise \cite{Jourdan, Gezici}. 
To that end, for no GLI, $j\in\{1,2\}$
\begin{align}
\hat{d}_j[i]     & = d_{12}[i] +  w_j[i],
~ w_j[i] \dist \Nc\left( 0, \frac{\gamma(d_{12}[i])\rho_{j}}{P} \right) \label{eq:legit_obs}\\
\hat{d}_{je}[i]  & = d_{je}[i] +  w_{je}[i], 
~ w_{je}[i]  \dist \Nc\left( 0, \frac{\gamma(d_{je}[i])\rho_{e}}{P} \right)  \label{eq:eve_obs} \\
\hat{\phi}_{e}[i]& = \phi_{e}[i] + w_{\phi_e}[i],
~ w_{\phi}[i]   \dist \Nc\left( 0,  \frac{\gamma_{\phi}(d_{1e}[i],d_{2e}[i])\rho_{e}}{P}
\right)  \label{eq:eve_angle_obs}  
\end{align}
%%%%%%%%%%%%%%%%%%%%%%%
are Gaussian noise processes,
where $P$ is the beacon power. The observation noise variances are increasing functions
of the distance, which are modeled by the increasing functions $\gamma$ for
distance observations and $\gamma_{\phi}$ for angle observations.
The parameter $\rho_j$ depends on the capability of the nodes.
For instance, in wireless localization, $\gamma$ and $\gamma_{\phi}$
depend on the path loss exponent, and $\rho$ depends on
receiver antenna gain, number of antennas, etc \cite{Jourdan, Gezici}.
%%%%%%%%%%%%%%%%%%%%%%%%%
For perfect GLI, we additionally assume\footnote{For perfect GLI, the angle information is obtained according to a fixed coordinate plane, hence the function $\gamma_{\phi}$ has single argument.} that for $j \in \{1,2\}$,
\begin{align}
\hat{\phi}_{j}[i]    & = {\phi}_{j}[i]  +  w_{\phi_j}[i], 
~ w_{\phi_j}[i]	   \dist \Nc\left(0, \frac{\gamma_{\phi}(d_{12}[i])\rho_{j}}{P} \right) \label{eq:gli_legit_angle_obs}\\
\hat{\phi}_{je}[i]  & = {\phi}_{je}[i] +  w_{{\phi}_{je}}[i],  
~ w_{{\phi}_{je}}[i] \dist \Nc\left(0, \frac{\gamma_{\phi}(d_{je}[i])\rho_{e}}{P}\right) \label{eq:gli_eve_angle_obs} 
\end{align}
%%%%%%%%%%%%%%%%%%%%%%%%%%%%%%%%%%%%%%%%%%%%%%%%%%%%%%%%%%%%%%%%%%%%%%%%%%%%%
Clearly, the achievable key rates depend highly on the functions $\gamma$, $\gamma_\phi$ and $\rho$. 
Note that,
there there may be a bias on these observations due to small scale fading \cite{Jourdan}. The effect of biased observations are 
\begin{journal}
considered in our technical report \cite{techreport}.
\end{journal}
\begin{techreport}
studied in Appendix~\ref{app:practical}.
\end{techreport}

\subsection{Beacon Power Asymptotics}\label{sec:beacon_power}

In this part, we analyze the beacon power asymptotics of the system.
We show that, if the eavesdropper does not observe the angle\footnote{Note that in some cases, the nodes cannot obtain any useful angle information, e.g.,
in wireless localization, when each node is equipped with a single omnidirectional antenna.}, i.e.,
$\hat{\phi}_e = \emptyset$, then $R_L$ increases unboundedly with the
beacon power $P$, which indicates that arbitrarily large secret key rates can be obtained. However,
when eavesdropper observes the angle information, then $R_U$ remains bounded, which indicates that the advantage gained
by increasing beacon power is rather limited. 
To clearly illustrate our insights, we present our results for the no GLI scenario. However,
the same conclusion holds for the perfect GLI case as well.
\begin{theorem}\label{t:RUbound}
When the eavesdropper obtains angle information, i.e., $I(\hat{\phiv}_e;\phiv_e)>0$,
\begin{align}
\lim_{P \to \infty} R_U < \infty.
 \end{align}
\end{theorem}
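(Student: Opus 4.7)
The plan is to bound $R_U$ from above via the conditional mutual information term inside the min in \eqref{RU}, and show that $\limsup_{n\to\infty}\frac{1}{n}I(\ov_1;\ov_2\mid\ov_e)$ remains bounded as $P\to\infty$. Heuristically, in the high beacon-power regime all three parties can estimate each $d_{12}[i]$ with error variance $O(1/P)$, so the residual correlation between $\ov_1$ and $\ov_2$ given $\ov_e$ saturates at a finite value. The hypothesis $I(\hat{\phiv}_e;\phiv_e)>0$ is precisely what makes the eavesdropper's estimator $O(1/P)$-accurate rather than $O(1)$-accurate.

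First, I would exploit the Markov chain $\ov_1\to\dv_{12}\to\ov_2$, which holds because the observation noises $w_1[\cdot]$ and $w_2[\cdot]$ in \eqref{eq:legit_obs} are mutually independent and independent of every other variable. Data processing then gives $I(\ov_1;\ov_2\mid\ov_e)\le I(\ov_1;\dv_{12}\mid\ov_e)=h(\ov_1\mid\ov_e)-h(\ov_1\mid\dv_{12},\ov_e)$. The second term equals $h(\wv_1)=\sum_{i=1}^n\frac{1}{2}\log(2\pi e\,\sigma_1^2[i])$ with $\sigma_1^2[i]=\gamma(d_{12}[i])\rho_1/P$; the first, by chain rule sub-additivity and the Gaussian max-entropy bound, satisfies
\begin{align}
h(\ov_1\mid\ov_e) \le \sum_{i=1}^{n}\tfrac{1}{2}\log\bigl(2\pi e(\sigma_1^2[i]+\tau_e^2[i])\bigr), \nonumber
\end{align}
where $\tau_e^2[i]\eqdef\var(d_{12}[i]\mid\ov_e)$, since $w_1[i]$ is independent of $\ov_e$. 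Combining,
\begin{align}
\frac{1}{n}I(\ov_1;\ov_2\mid\ov_e) \le \frac{1}{2n}\sum_{i=1}^{n}\log\!\left(1+\frac{\tau_e^2[i]}{\sigma_1^2[i]}\right). \nonumber
\end{align}

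Second, I would upper-bound $\tau_e^2[i]$ by the mean-squared error of the explicit estimator $\tilde d_{12}^{(e)}[i]\eqdef\sqrt{\hat d_{1e}[i]^2+\hat d_{2e}[i]^2-2\hat d_{1e}[i]\hat d_{2e}[i]\cos\hat\phi_e[i]}$ derived from the law of cosines. A first-order Taylor (delta-method) expansion around $(d_{1e}[i],d_{2e}[i],\phi_e[i])$ shows this estimator has error variance $O(1/P)$ with a coefficient depending on the instantaneous geometry $\sv[i]$, and this coefficient is finite precisely because the angle noise variance $\gamma_\phi\rho_e/P$ is finite---this is where the assumption $I(\hat\phiv_e;\phiv_e)>0$ enters. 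Hence $\tau_e^2[i]/\sigma_1^2[i]=O(1)$ as $P\to\infty$. By ergodicity of $\sv$, the Ces\`aro average then converges to $\expect[\log(1+\tau_e^2/\sigma_1^2)]$, a finite deterministic quantity independent of $P$ in the limit, which bounds $R_U$.

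The main technical obstacle is making the delta-method control uniform over the stationary law of $\sv$, since the Taylor coefficient grows in near-degenerate configurations (e.g., when $\sin\phi_e$ is small, or when $d_{1e},d_{2e}$ are near the extremes of their range). I would handle this by splitting the expectation into a \emph{good} set of geometries, where the standing assumption $d_{12},d_{1e},d_{2e}\in[d_{\min},d_{\max}]$ supplies uniform control, and an exceptional set of small probability whose contribution can be bounded using the trivial estimate $\tau_e^2[i]\le\var(d_{12}[i])\le(d_{\max}-d_{\min})^2$, keeping the overall limit finite.
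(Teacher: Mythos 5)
Your overall architecture is the same as the paper's: you bound $R_U$ through the conditional term $I(\ov_1;\ov_2|\ov_e)$ in \eqref{RU}, reduce via the Markov chain $\hat{\dv}_1 \rightarrow \dv_{12} \rightarrow (\hat{\dv}_2,\ov_e)$ to a difference of the form $h(\hat{\dv}_1|\ov_e)-h(\hat{\dv}_1|\dv_{12})$, single-letterize, apply the Gaussian maximum-entropy bound, and argue that the eavesdropper's cosine-law estimate of $d_{12}[i]$ has error variance $O(1/P)$ precisely because the angle is observed, so the two $\tfrac12\log P$ terms cancel and the limit is finite. That is exactly the paper's route; the only substantive divergence is how the $O(1/P)$ accuracy of the cosine-law estimator is made rigorous, and this is where your plan has a gap.

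Two problems with that step as written. First, the delta method is only a heuristic here: the measurement noises $w_{1e},w_{2e},w_{\phi_e}$ are Gaussian, hence unbounded, and the cosine-law expression must be truncated as $\sqrt{[\,\cdot\,]^+}$; a first-order Taylor expansion does not by itself control the mean-squared error of $\sqrt{[1+\kappa]^+}$, which is why the paper replaces it with the dedicated variance lemmas (Lemma~\ref{l:variancebound3}, giving $\var(\sqrt{[1+x]^+})\le\var(\alpha x)$ once $\expect[x]>-1$ for $P$ large, and Lemma~\ref{l:variancebound4}, using $|1-\sqrt{(1+x)^+}|\le|x|$ for the conditional-mean term). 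You could instead truncate the noise at an $O(1)$ level and use the exponentially small Gaussian tail plus Cauchy--Schwarz to show the bad-noise event contributes $o(1/P)$, but some such device is needed and is absent from the proposal. Second, your proposed fix --- a good/bad split over \emph{geometries} with the crude bound $\tau_e^2[i]\le \mathrm{const}$ on the exceptional set --- does not keep the limit finite: on any exceptional set whose probability is bounded away from zero (and a set of ``near-degenerate geometries'' is independent of $P$), the integrand $\tfrac12\log(1+\tau_e^2[i]/\sigma_1^2[i])$ grows like $\tfrac12\log P$, so its contribution diverges unless the exceptional probability is made to vanish with $P$, which your plan does not provide. Moreover, the configurations you single out are not actually the obstruction: since $d_{12},d_{1e},d_{2e}\in[d_{\min},d_{\max}]$ with $d_{\min}>0$, the linearization coefficients $\tfrac{1}{d_{12}}(d_{1e}-d_{2e}\cos\phi_e)$, $\tfrac{1}{d_{12}}(d_{2e}-d_{1e}\cos\phi_e)$ and $\tfrac{1}{d_{12}}d_{1e}d_{2e}\sin\phi_e$ are uniformly bounded over the entire support (small $\sin\phi_e$ only shrinks the angle term), so no geometric exceptional set is needed at all; the uniformity issue lives entirely in the noise tails and the $[\,\cdot\,]^+$ truncation. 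With that step repaired along the lines above, your argument closes and coincides in substance with the paper's proof of the bound $\lim_{P\to\infty}R_U\le\eta$ in \eqref{RUbound_eta}.
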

The proof is in 
\begin{journal}
Appendix~\ref{app:keyratebounds},
%our technical report \cite{techreport},
\end{journal}
\begin{techreport}
Appendix~\ref{app:keyratebounds}, 
\end{techreport}
where we show that $\lim_{P \to \infty} R_U \leq \eta$,
where
%%%%%%%%%%%%%%%%%%%%%%%%%%
\begin{singlecolumn}
\begin{align}
\eta &= \frac{1}{2} \log\bigg{\{}
  2\pi\expect \bigg{[} \frac{\rho_e}{d_{12}^2} \bigg{(}
 \frac{d_{12}^2\rho_1}{\rho_e} \gamma(d_{12})  +  4(d_{1e}+d_{2e})^2 (\sqrt{\gamma(d_{1e})}
+\sqrt{\gamma(d_{2e})})^2 \nonumber\\
& ~+ (4 d_{1e}d_{2e}+ 64(d_{1e}d_{2e})^2)\gamma_{\phi}(d_{1e},d_{2e}) + 8(d_{1e}+d_{2e})d_{1e}d_{2e}\big{(}\sqrt{\gamma(d_{1e})} \nonumber\\
& ~ +\sqrt{\gamma(d_{2e})} \big{)}\sqrt{\gamma_{\phi}(d_{1e},d_{2e})}
   + 64 (d_{1e}+d_{2e})^2 (\gamma(d_{1e})+\gamma(d_{2e})) \bigg{)} \bigg{]} \bigg{\}} 
 - \frac{1}{2}\expect\bigg{[}\log\left(2\pi\rho_1\gamma( d_{12})\right) \bigg{]}. \label{RUbound_eta}
\end{align}
\end{singlecolumn}
\begin{multicolumn}
\begin{align}
\eta &= \frac{1}{2} \log\bigg{\{}
  2\pi\expect \bigg{[} \frac{\rho_e}{d_{12}^2} \bigg{(}
 \frac{d_{12}^2\rho_1}{\rho_e} \gamma(d_{12}) \nonumber\\
& +  4(d_{1e}+d_{2e})^2 (\sqrt{\gamma(d_{1e})}
+\sqrt{\gamma(d_{2e})})^2 \nonumber\\
& + (4 d_{1e}d_{2e}+ 64(d_{1e}d_{2e})^2)\gamma_{\phi}(d_{1e},d_{2e}) \nonumber\\
&  + 8(d_{1e}+d_{2e})d_{1e}d_{2e}\big{(}\sqrt{\gamma(d_{1e})} \nonumber\\
&  +\sqrt{\gamma(d_{2e})} \big{)}\sqrt{\gamma_{\phi}(d_{1e},d_{2e})}
 \nonumber\\
&+ 64 (d_{1e}+d_{2e})^2 (\gamma(d_{1e})+\gamma(d_{2e})) \bigg{)} \bigg{]} \bigg{\}} \nonumber\\
& - \frac{1}{2}\expect\bigg{[}\log\left(2\pi\rho_1\gamma( d_{12})\right) \bigg{]}. \label{RUbound_eta}
\end{align}
\end{multicolumn}
%%%%%%%%%%%%%%%%%%%%%%%%%%%
The parameter $\eta$ remains finite since the distances take on values in 
some bounded range $[d_{\min},d_{\max}]$ with probability $1$. Therefore, the secret key rate remains bounded.
\begin{theorem}\label{t:RLbound}
When the eavesdropper does not obtain any angle information, i.e.,
$I(\hat{\phiv}_e;\phiv_e)=0$, 
\begin{align*}
\lim_{P \to \infty} \frac{R_L}{\frac{1}{2} \log (P) } =  \lim_{P \to \infty} \frac{R_U}{\frac{1}{2} \log (P) } = 1.
\end{align*}
\end{theorem}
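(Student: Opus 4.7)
The plan is to sandwich both $R_L$ and $R_U$ between quantities that scale like $\frac{1}{2}\log P$ by exploiting $R_L\le R_U$, so that dividing by $\tfrac{1}{2}\log P$ pins both ratios to $1$. Concretely, I aim to show
\begin{align*}
\tfrac{1}{2}\log P - O(1) \;\le\; R_L \;\le\; R_U \;\le\; \tfrac{1}{2}\log P + O(1).
\end{align*}

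\textbf{Step 1 (legitimate-node MI rate).} Since $\hat d_j[i]=d_{12}[i]+w_j[i]$ with noise variance $\gamma(d_{12}[i])\rho_j/P$, and $\ov_1,\ov_2$ are conditionally independent given $\dv_{12}$,
\begin{align*}
I(\ov_1;\ov_2)\le I(\dv_{12};\ov_2)=h(\ov_2)-h(\ov_2\mid\dv_{12}).
\end{align*}
Since each $\hat d_2[i]$ has $O(1)$-bounded variance, $h(\ov_2)=O(n)$, whereas $h(\ov_2\mid\dv_{12})=\sum_i \tfrac{1}{2}\expect[\log(2\pi e\gamma(d_{12}[i])\rho_2/P)]=-\tfrac{n}{2}\log P+O(n)$, so $\tfrac{1}{n}I(\ov_1;\ov_2)\le \tfrac{1}{2}\log P+O(1)$. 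For the matching lower bound, bound $h(\ov_2\mid \ov_1)\le h(\ov_2-\ov_1)=h(\wv_2-\wv_1)$; replacing by a Gaussian with the same unconditional variance gives $h(\wv_2-\wv_1)\le -\tfrac{n}{2}\log P+O(n)$. Combining with $h(\ov_2)=O(n)$ yields $\tfrac{1}{n}I(\ov_1;\ov_2)\ge \tfrac{1}{2}\log P-O(1)$.

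\textbf{Step 2 (boundedness of eavesdropper MI rate).} With $I(\hat{\phiv}_e;\phiv_e)=0$, node $e$'s observations reduce to $\ov_e=(\hat{\dv}_{1e},\hat{\dv}_{2e})$. Given the true distance trajectories $(\dv_{12},\dv_{1e},\dv_{2e})$, $\ov_1$ and $\ov_e$ are conditionally independent, so by data processing
\begin{align*}
I(\ov_1;\ov_e)\le I(\dv_{12};\dv_{1e},\dv_{2e}).
\end{align*}
Even with $(\dv_{1e},\dv_{2e})$ known exactly, $d_{12}[i]$ remains constrained only to $[|d_{1e}[i]-d_{2e}[i]|,d_{1e}[i]+d_{2e}[i]]$ since the triangle can freely rotate around $e$, so $h(\dv_{12}\mid \dv_{1e},\dv_{2e})>-\infty$ and $h(\dv_{12})<\infty$, both $O(n)$; thus $\tfrac{1}{n}I(\ov_1;\ov_e)=O(1)$ uniformly in $P$. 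Combining the two steps gives $R_L\ge \tfrac{1}{2}\log P-O(1)$ and $R_U\le \tfrac{1}{2}\log P+O(1)$, yielding the claim.

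\textbf{Main obstacle.} The most delicate step is the boundedness of $I(\dv_{12};\dv_{1e},\dv_{2e})$: the rotational-ambiguity argument breaks down if the mobility model concentrates the three nodes on a line, making $d_{12}$ a deterministic function of $(d_{1e},d_{2e})$. I would discharge this by invoking that the joint density of $\sv=(\lv_1,\lv_2,\lv_e)$ is supported on a full-dimensional subset of $\Lc^3$, ensuring that the admissible interval for $d_{12}[i]$ given $(d_{1e}[i],d_{2e}[i])$ has positive length with probability one. A secondary technicality in Step 1 is that the noise variance depends on $d_{12}$, breaking the independence of $\wv_j$ and $\dv_{12}$; this is handled by working with conditional distributions throughout and averaging over $\dv_{12}$.
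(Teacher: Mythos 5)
Your proposal is correct in substance and rests on the same two pillars as the paper's proof: (i) the legitimate nodes' noise-difference entropy scales like $-\tfrac{1}{2}\log P$ per slot (your bound $h(\ov_2|\ov_1)\le h(\wv_2-\wv_1)$ is literally the paper's step $h(\hat{\dv}_1|\hat{\dv}_2)\le h(\wv_1-\wv_2)$), and (ii) without angle observations the eavesdropper's distance information leaves a residual, $P$-independent uncertainty about $d_{12}$ because the triangle can rotate about $e$. The organization differs, though. The paper works directly with $R_L=\lim\frac1n\big(h(\hat{\dv}_1|\hat{\dv}_{1e},\hat{\dv}_{2e})-h(\hat{\dv}_1|\hat{\dv}_2)\big)$ and bounds $R_U$ through the \emph{conditional} term $I(\ov_1;\ov_2|\ov_e)$ of the min, arguing via the chain rule over slots and the cosine law that $h(\dv_{12}|\dv_{1e},\dv_{2e})>-\infty$; you instead split into $I(\ov_1;\ov_2)$ and $I(\ov_1;\ov_e)$, kill the eavesdropper term by the data-processing chain $\ov_1\to\dv_{12}\to(\dv_{1e},\dv_{2e})\to\ov_e$ (valid here, since the noises are conditionally independent given the relevant distances), and bound $R_U$ through the \emph{unconditional} branch of the min. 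Your reduction $I(\ov_1;\ov_e)\le I(\dv_{12};\dv_{1e},\dv_{2e})$ is arguably cleaner and manifestly uniform in $P$, whereas the paper's corresponding step asserts an exchange of limits ($\frac1n h(\hat{\dv}_1|\hat{\dv}_{1e},\hat{\dv}_{2e})\to\frac1n h(\dv_{12}|\dv_{1e},\dv_{2e})$ as $P\to\infty$) without proof; your non-degeneracy caveat about collinear configurations is the same implicit assumption the paper hides in ``the densities are well defined.''

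The one genuine soft spot is the uniform-in-$P$ lower bound $h(\ov_2)\ge -Cn$ needed in your Step 1 lower bound. Because the noise variance depends on $d_{12}$, you cannot invoke ``adding independent noise does not decrease entropy,'' and your proposed fix (``condition and average over $\dv_{12}$'') does not work as stated: conditioning on $\dv_{12}$ sends the entropy to $-\infty$ with $P$. A workable repair is to condition on the normalized noise $\zv_2$ (with $w_2[i]=\sqrt{\gamma(d_{12}[i])\rho_2/P}\,z_2[i]$), note that for large $P$ the map $d\mapsto d+\sqrt{\gamma(d)\rho_2/P}\,z$ is injective with Jacobian near $1$ on a high-probability set of $z$, and control the complementary event; alternatively, bypass the issue entirely by bounding the combined quantity $I(\ov_1;\ov_2)-I(\ov_1;\ov_e)=h(\ov_1|\ov_e)-h(\ov_1|\ov_2)$, so the unconditional entropy cancels and only a lower bound on $h(\ov_1|\ov_e)$ (your Step 2 geometry, applied as in the paper) is needed. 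With that repair, your argument goes through and is essentially as rigorous as the paper's own.
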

The proof is provided in 
\begin{techreport}
Appendix~\ref{app:keyratebounds}.
\end{techreport}
\begin{journal}
Appendix~\ref{app:keyratebounds}.
%our technical report \cite{techreport}.
\end{journal}
Theorem~\ref{t:RLbound} implies that, without the angle observation at the eavesdropper, an arbitrarily large key rate can be achieved with
sufficiently large beacon power $P$.
However, the key rate increases with $\log(P)$, which means that increasing
the beacon power would provide diminishing returns.
%%%%%%%%%%%%%%%%%%%%%%%%%%%%%%%%%%%%%%%%%%%%%%%%%%%%%%%%%%%%%%%%
%%%%%%%%%%
%%%%%%%%%%%%%%%%%%%%%%%%%%%%%%%%%%%%%%%%%%%%%%%%%%%%%%%%%%%%%%%%
%%%%%%%%%%%%%%%%%%%%%%%%%%%%%%%%%%%%%%%%
%%%%%%%%%%%%%%%%%%%%%%%%%%%%%%%%%%%%%%%%
\subsection{Numerical Evaluations}\label{s:simulations}
We evaluate 
the theoretical bounds in Section~\ref{sec:keyrates} for Gaussian observations model, using
Monte Carlo simulations. 
%%%%%%%%%%%%%%%%%%%%%%%%%%%%%%%%%%%%%%%%%%55
\subsubsection*{Setup}
We consider a simple $(M \times M)$ discrete 2-D grid, which
simulates a city with $M$ blocks that covers a square field of area $A^2$, such that
for any $j \in \{1,2,e\}$, $i \in\{1,\ldots,n\}$, 
$l_j[i] = [x~y] \in \{\frac{A}{M},\ldots,M\frac{A}{M}\}
\times \{\frac{A}{M},\ldots,M\frac{A}{M}\}$.  
Node mobilities are Markov, and characterized by parameter $B$, where
%%%%%%%%%%%%%%%%%%%%%%%%%%%%%%
\begin{singlecolumn}
\begin{align*}
\Pr \big{(}l_j[i] = [x~y] ~|~ & l_j[i-1] = [x'~y']\big{)} =
\begin{cases}
\frac{1}{(B+1)^2} \mbox{ if} & |x-x'| \leq \frac{AB}{M} \\
                             & |y-y'| \leq \frac{AB}{M} \\
~ ~ 0               &\mbox{ otherwise }
\end{cases}
\end{align*}
\end{singlecolumn}
\begin{multicolumn}
\begin{align*}
\Pr \big{(}l_j[i] = [x~y] ~|~ & l_j[i-1] = [x'~y']\big{)} =
 \\ &
\begin{cases}
\frac{1}{(B+1)^2} \mbox{ if} & |x-x'| \leq \frac{AB}{M} \\
                             & |y-y'| \leq \frac{AB}{M} \\
~ ~ 0               &\mbox{ otherwise }
\end{cases}
\end{align*}
\end{multicolumn}
%%%%%%%%%%%%%%%%%%%%%%%%%%%%%%
For no GLI, we choose
$\gamma(d) = 0.1 + d^2$, and 
$$\gamma_{\phi}(d_{1e},d_{2e}) = \pi - \frac{\pi}{1.1 + 
(d_{1e}^2 + d_{2e}^2)},$$ 
and for perfect GLI, we choose
$$\gamma_{\phi}(d_{je}) = \pi - \frac{\pi}{1.1 + \left(d_{je}\right)^2 }$$
such that both parameters are strictly increasing functions of the distances.
\footnote{A similar model for distance observation noise is used in \cite{Gezici}. 
Since $\phi_e \in [0,\pi]$, the angle observation error variance cannot diverge with distance, and we upper bounded the variance term by $\pi/1.1$. To avoid zero error variances at zero distance, we introduce a $0.1$ offset to numerator and denominator of $\gamma$ and $\gamma_{\phi}$, respectively.}
We consider node capability parameters $\rho_1=\rho_2=\rho_e=1$, unless stated otherwise.
The theoretical key rates in Section~\ref{sec:keyrates} converge as $n\to\infty$, therefore they are calculated for large enough $n$, using the forward algorithm procedure.
 
%%%%%%%%%%%%%%%%%%%%%%%%%%%%%%%%%%%%%%%%%%%%%%%
\subsubsection*{Results}

Due to computational limitations,
we consider examples in which $M\leq 11$, and $B \leq 3$. 
Note that this choice limits the maximum achievable secret key rate\footnote{For instance, 
for $B=1$, there are $13$ different possible distance combinations. 
 Consequently, a key rate of $\log 13$
is an absolute upper bound for no GLI even in the case when the eavesdropper does not obtain any observation.}.

\begin{figure*}[ht]
\begin{minipage}[b]{0.32\linewidth}
\hspace{-0.2cm}
\includegraphics[width=1\textwidth]{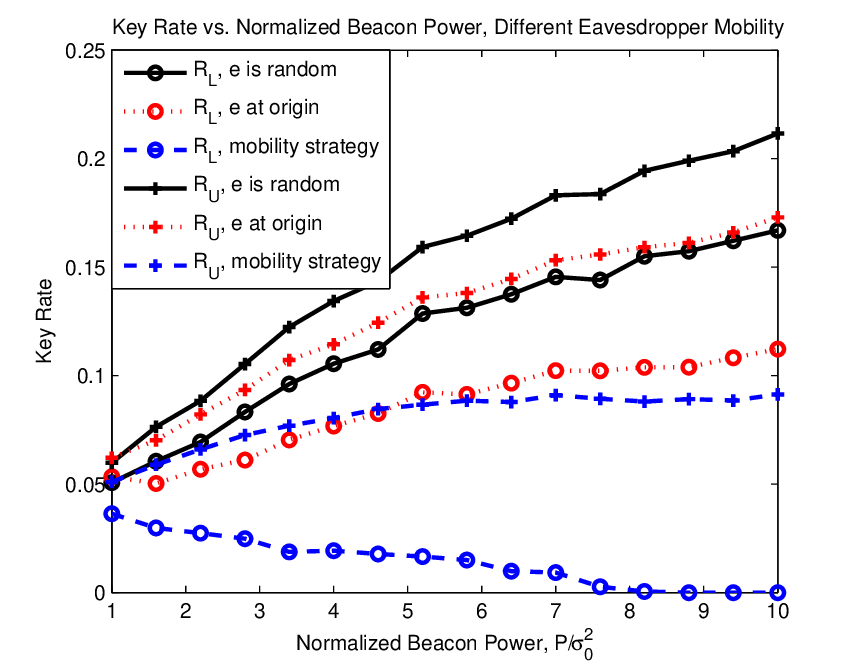}
\caption{Effect of eavesdropper mobility on the key rate}
\label{fig:emobility}
\end{minipage}
\hspace{0.1cm}
\begin{minipage}[b]{0.32\linewidth}
\hspace{-0.2cm}
\includegraphics[width=1.1\textwidth]{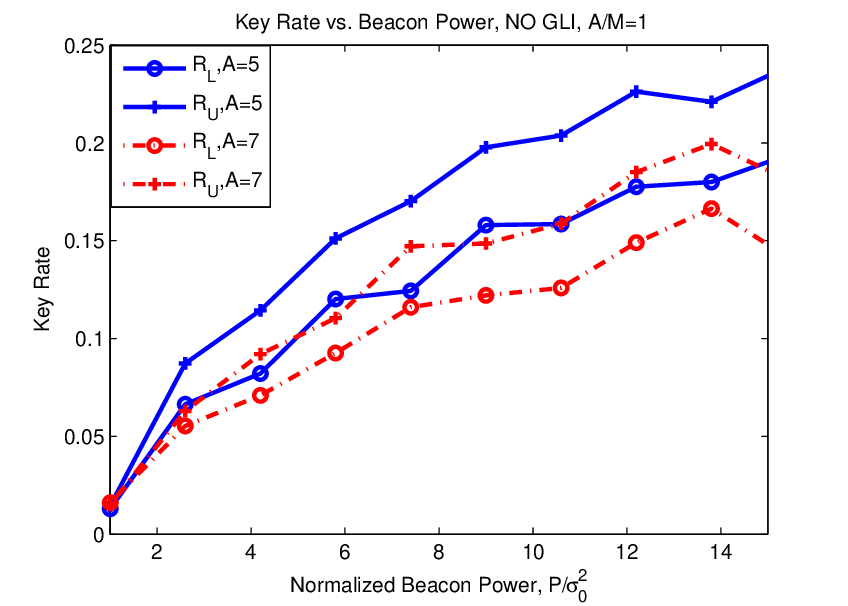}
\caption{Bounds for no GLI vs normalized beacon power, for different $M$, $B=1$, $A/M=1$}
\label{fig:nogli1}
\end{minipage}
\hspace{0.1cm}
\begin{minipage}[b]{0.32\linewidth}
\hspace{-0.3cm}
\includegraphics[width=1.1\textwidth]{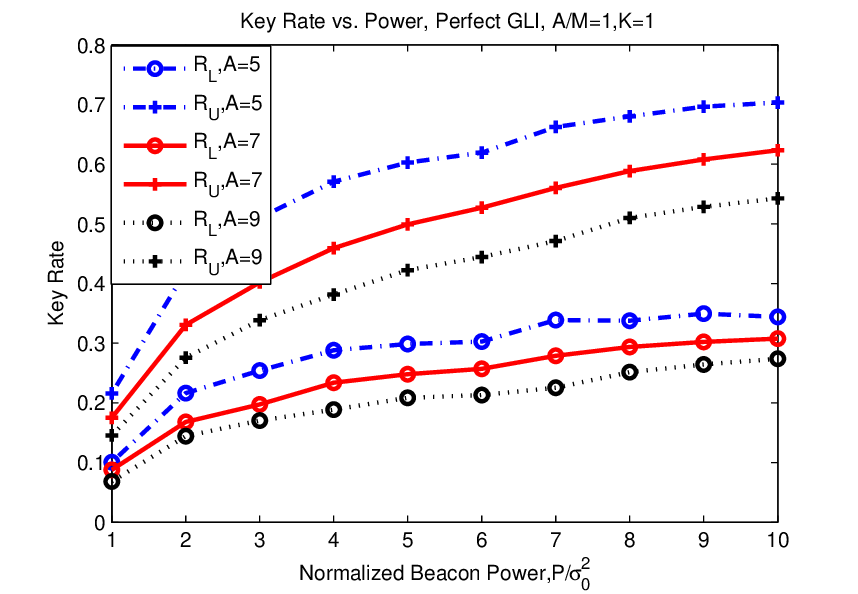}
\caption{Bounds for perfect GLI vs normalized beacon power, for different $M$, $B=1$, $A/M=1$}
\label{fig:gli1}
\end{minipage}
\end{figure*}
% \noindent \begin{figure}[ht]
% \centerline{\includegraphics[width=8cm]{keygen_nogli}}
%     \caption{Comparison of theoretical key capacity limits and key generation algorithm vs normalized beacon power $P/\sigma_0^2$, no GLI}
%     \label{fig:practical_keyrate}
% \end{figure}
% \noindent \begin{figure}[ht]
% \centerline{\includegraphics[width=8.8cm]{SameAMdifferentA57Newmodel}}
%     \caption{Upper and lower bounds for no GLI vs normalized beacon power $P/\sigma_0^2$, for different $M$, $B=1$, $A/M=1$}
%     \label{fig:nogli1}
% \end{figure}
% \noindent \begin{figure}[ht]
% \centerline{\includegraphics[width=8.8cm]{GLIdifferentgridsize579Newmodel}}
%     \caption{Upper and lower bounds for perfect GLI vs normalized beacon power $P/\sigma_0^2$, for different $M$, $B=1$, $A/M=1$}
%     \label{fig:gli1}
% \end{figure}

Then, we analyze the effect of the different grid size, field area and GLI on the theoretical key rates. 
In Figures~\ref{fig:nogli1} and \ref{fig:gli1}, we plot
the bounds on the achievable key rate with respect to the normalized beacon power $P/\sigma_0^2$ for different grid size $M$
 for no GLI and perfect GLI cases, respectively. We assumed a constant ratio of field size and grid size, $A/M = 1$, 
and considered $B=1$. We can see that, there is a diminishing return on the increased
power levels for the achievable key rate. Furthermore, we can see that increasing the field area $A^2$
has a negative impact on the key rate despite the increase 
in $M$, which is due to the fact that the common information of the legitimate nodes decreases as a result of increase in their observation
error variance. 
%%%%%%%%%%%%%%%%%%%%%%%%%%%%%%%%%%%%%%%%%%%%%%%%%%%%%%%%%%%%%
%%%%%%%%%%%%%%%%%%%%%%%%%
%% Two additional figures for technical report
\begin{techreport}
Next, in Figures~\ref{fig:nogli2} and \ref{fig:gli2}, we plot the bounds with respect to beacon power $P$,
for different step size $B=1$ for no GLI and perfect GLI cases, respectively. We assumed $M=5$ for the no GLI case,
and $M=7$ for perfect GLI case, and in both cases, the ratio of field size and grid size is constant, such that $A/M =1$.
We can clearly see the positive effect of the increased step size on the secret key rate. 
This is due to the increase in different distance combinations that are possible.
\noindent \begin{figure}[ht]
\centerline{\includegraphics[width=7cm]{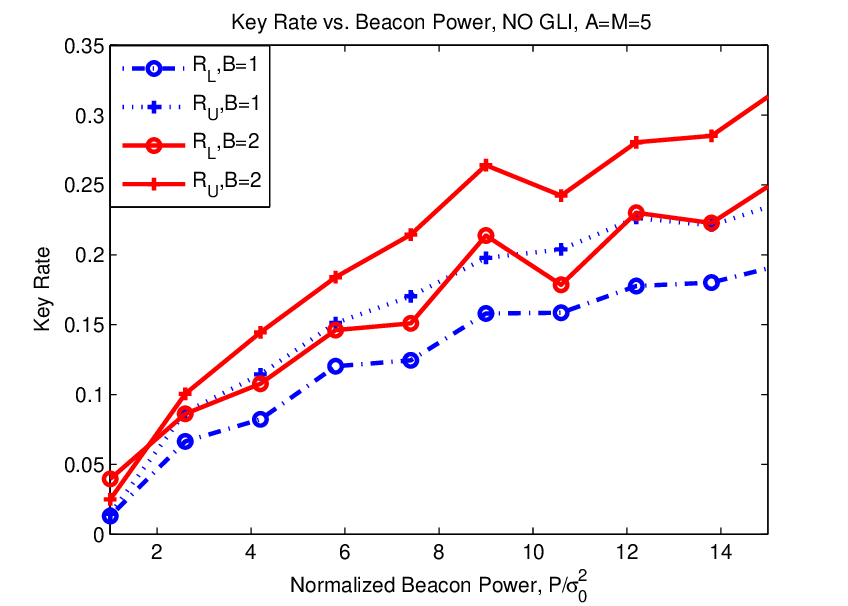}}
    \caption{Upper and lower bounds for no GLI vs normalized beacon power $P/\sigma_0^2$, $M=A=5$, for different $B$}
    \label{fig:nogli2}
\end{figure}
\noindent \begin{figure}[ht]
\centerline{\includegraphics[width=7cm]{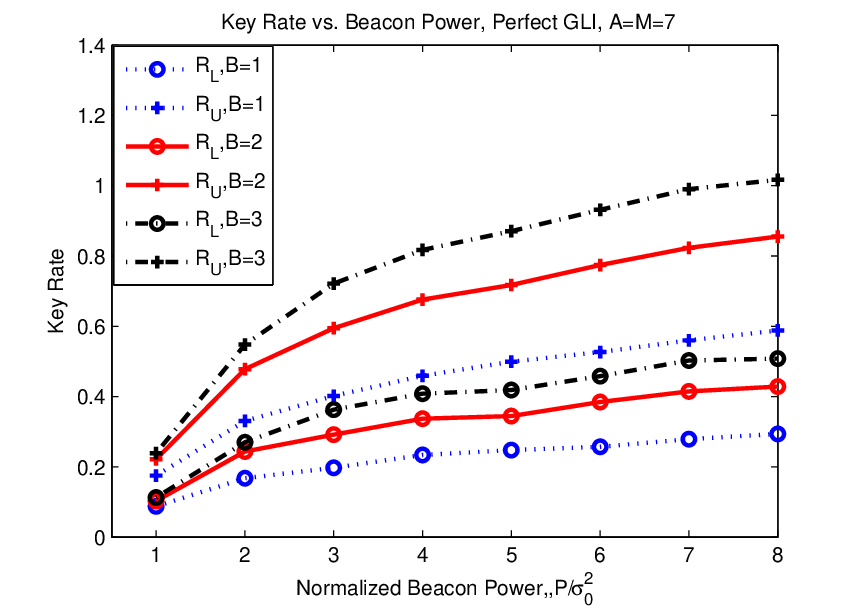}}
    \caption{Upper and lower bounds for perfect GLI vs normalized beacon power $P/\sigma_0^2$, $M=A=7$, for different $B$}
    \label{fig:gli2}
\end{figure}
\end{techreport}
%%%%%%%%%%%%%%%%%%%%%%%%%
%%%%%%%%%%%%%%%%%%%%%%%%%%%%%%%%%%%%%%%%%%%%%%%%%%%%%%%%%%%%%

Finally, we analyze the effect of eavesdropper mobility on the achievable key rate. In Figure~\ref{fig:emobility}, for $M=7$, 
$A=5$ and $B=1$, we plot the secret key rate bounds versus beacon power for the cases where
the eavesdropper
i) follows the random mobility pattern described in the setup with parameter $B=1$, ii) stays at the origin, and iii) follows the man in the middle strategy
described in Section~\ref{sec:attacker}, i.e., moves to the mid point of its location estimates
of nodes $1$ and $2$. We can see that, compared to following a random mobility pattern, the eavesdropper can reduce the 
achievable secret key rate significantly by following this strategy.
However, the rate still remains positive.
We observe that the eavesdropper can also reduce the key rate by simply staying static at a certain favorable location, rather than moving randomly. 
However, in practice this may not be feasible for the eavesdropper, since by staying put, it will lose connection completely with the legitimate nodes in a large region.

% \noindent \begin{figure}[ht]
% \centerline{\includegraphics[width=8.8cm]{opportunistic}}
%     \caption{Opportunistic beacon exchange, no GLI}
%     \label{fig:nogli3}
% \end{figure}
% \noindent \begin{figure}[ht]
% \centerline{\includegraphics[width=8.8cm]{emobility}}
%     \caption{Effect of eavesdropper mobility on the key rate}
%     \label{fig:emobility}
% \end{figure}
% \noindent \begin{figure}[ht]
% \centerline{\includegraphics[width=8.8cm]{wheel_fwy_distance_obs}}
%     \caption{Distance observations for freeway data}
%     \label{fig:wheel_fwy_distance_obs}
% \end{figure}
%%%%%%%%%%%%%%%%%%%%%%%%%%%%%%%%%%%%%%%%%%%%%%%%%%%%%%%%%%%%%%%%%%%%%%%%%%%%%%%%%%%%%%%%%%
%%%%%%%%%%%%%%%%%%%%%%%%%%%%%%%%%%%%%%%%%%%%%%%%%%%%%%%%%%%%%%%%%%%%%%%%%%%%%%%%%%%%%%%%%
%%%%%%%%%%%%%%%%%%%%%%%%%%%%%%%%%%%%%%%%%%%%%%%%%%%%%%%%%%%%%%%%%%%%%%%%%%%%%%%%%%%%%%%%%%
%%%%%%%%%%%%%%%%%%%%%%%%%%%%%%%%%%%%%%%%%%%%%%%%%%%%%%%%%%%%%%%%%%%%%%%%%%%%%%%%%%%%%%%%%%

\section{Conclusion}\label{s:conclusion}
In this paper, we showed
that relative localization information is an additional resource for generating
secret key bits in mobile networks.
 We studied the information theoretic limits of secret key
generation, and characterized lower and upper bounds of key
rates utilizing results for the
cases in which the nodes are/are not capable of observing their global locations.
Focusing on the special case where the observation noise is i.i.d. Gaussian, 
we studied the beacon power asymptotics, and
observed that, interestingly, when the eavesdropper cannot observe
the angle information, the secret key rate grows \emph{unboundedly}. 
The following research directions can be further investigated 
1) theoretical performance analysis of secret key generation 
in large networks, taking into account the recent advances in
network information theoretic security, and 
2) security analysis of various adversarial models,  such as
active jamming attacks, or impersonation attacks in unauthenticated networks.

\begin{appendices}
%%%%%%%%%%%%%%%%%%%%%%%%%%%%%%%%%%%%%%%%%%%%%%%%%%%%%%%%%%%%%%%%%%%%%%%%%%%%%%%%%%%%%%%
%%%%%%%%%%%%%%%%%%%%%%%%%%%%%%%%%%%%%%%%%%%%%%%%%%%%%%%%%%%%%%%%%%%%%%%%%%%%%%
%%%%%%%%%%%%%%%%%%%%%%%%%%%%%%%%%%%%%%%%%%%%%%%%%%%%%%%%%%%%%%%%%%%%%%%%%%%%%%
%%%%%%%%%%%%%%%%%%%%%%%%%%%%%%%%%%%%%%%%%%%%%%%%%%%%%%%%%%%%%%%%%%%%%%%%%%%%%%
\section{Proofs of Theorems in Section~\ref{sec:beacon_power}}\label{app:keyratebounds}
\subsection{Proof of Theorem~\ref{t:RUbound}}
We first provide three lemmas that will be useful when proving the theorem.
%%%%%%%%%%%%%%%%%%%%%%%%%%%%%%%%%%%%%%%%%%%%%%%%%%%%%%%%%%%%%%%%%%%%%%%%
%%%%%%%%%%%%%%%%%%%%%%%%%%%%%
\begin{lemma}\label{l:variancebound1}
Let $x$ and $y$ be random variables. Then,
 \begin{align}
\var(x+y)   &\leq  2\var(x)+ 2\var(y). \label{RUbound3}
\end{align}
If $x$ and $y$ are independent, then
$\var(x+y)   = \var(x) + \var(y)$.
\end{lemma}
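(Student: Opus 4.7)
The plan is to prove the two claims in sequence using the standard expansion $\var(x+y) = \var(x) + \var(y) + 2\cov(x,y)$.

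For the independent case, I would simply note that independence implies $\cov(x,y) = 0$, so the identity $\var(x+y) = \var(x) + \var(y) + 2\cov(x,y)$ immediately collapses to $\var(x+y) = \var(x) + \var(y)$. This is a one-liner.

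For the general inequality, I would first invoke the Cauchy--Schwarz inequality for the covariance, $|\cov(x,y)| \leq \sqrt{\var(x)\var(y)}$, and then apply AM--GM in the form $2\sqrt{\var(x)\var(y)} \leq \var(x) + \var(y)$. Chaining these with the variance-sum identity gives
\begin{align*}
\var(x+y) &= \var(x) + \var(y) + 2\cov(x,y) \\
          &\leq \var(x) + \var(y) + 2\sqrt{\var(x)\var(y)} \\
          &\leq 2\var(x) + 2\var(y),
\end{align*}
which is exactly the desired bound. An equivalent route would be to center the variables as $\tilde{x} = x - \expect[x]$, $\tilde{y} = y - \expect[y]$, apply the pointwise inequality $(\tilde{x} + \tilde{y})^2 \leq 2\tilde{x}^2 + 2\tilde{y}^2$ (which follows from $(\tilde{x} - \tilde{y})^2 \geq 0$), and take expectations; I would likely present this second route since it is shorter and avoids Cauchy--Schwarz.

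There is no real obstacle here; the lemma is a standard textbook fact stated for bookkeeping so that the later inequality chains in Appendix~\ref{app:keyratebounds} (where the variance of $\hat{d}_e^\ast$ is bounded by a sum of terms arising from distance and angle noise contributions) can cite a single named bound. The only thing to be mindful of is to keep the proof short and to state both equality cases (independence) and inequality cases (general) explicitly, since the subsequent proof of Theorem~\ref{t:RUbound} seems to invoke both when aggregating the noise terms $w_{1e}, w_{2e}, w_{\phi_e}$ from \eqref{appa:linearapprox2}.
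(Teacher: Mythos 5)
Your proposal is correct and matches the paper's own argument: the paper likewise expands $\var(x+y) = \var(x)+\var(y)+2\cov(x,y)$, bounds the covariance by $\sqrt{\var(x)\var(y)}$, and applies $2\sqrt{\var(x)\var(y)} \leq \var(x)+\var(y)$, with independence giving $\cov(x,y)=0$ for the equality case. Your alternative centered-variable route is also fine but is not needed.
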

\begin{techreport}
\begin{proof}
When $x$ and $y$ are not independent,
\begin{align}
\var(x+y) & = \var(x) + \var(y) + 2 \cov(x,y) \nonumber\\
          &\leq \var(x) + \var(y) + 2\sqrt{\var(x)\var(y)} \nonumber\\
          &\leq 2\var(x)+ 2\var(y). \label{app:variancebound1_1}
\end{align}
where \eqref{app:variancebound1_1} follows from the fact that $\var(x) + \var(y) \geq 2\sqrt{\var(x)\var(y)}$. When $x$ and $y$ are independent,
$\cov(x,y) = 0$, implying the result.
\end{proof}
\end{techreport}
%%%%%%%%%%%%%%%%%%%%%%%%%%%%%
\begin{lemma}\label{l:variancebound3}
Let $x$ be a random variable such that $\expect[x] \geq \mu$, where $\mu > -1$. Let $\alpha = \frac{\sqrt{1+\mu}}{1+\mu}$. Then, $\var(\sqrt{[1+x]^+}) \leq \var(\alpha x)$.
\end{lemma}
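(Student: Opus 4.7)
Proof plan for Lemma~\ref{l:variancebound3}.

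My plan is to exploit the standard variational characterization $\var(Y)\le \expect[(Y-c)^2]$ for every constant $c$, and then find a pointwise linear upper bound on the deviation of $g(x)\eqdef\sqrt{[1+x]^+}$ from a convenient reference point. Note that $\alpha=\frac{\sqrt{1+\mu}}{1+\mu}=\frac{1}{\sqrt{1+\mu}}$, so the target inequality is
\begin{equation*}
\var\!\left(\sqrt{[1+x]^+}\right)\;\le\;\frac{\var(x)}{1+\mu}.
\end{equation*}
Let $\mu'\eqdef \expect[x]\ge \mu$. Choosing $c=\sqrt{1+\mu'}$ yields
\begin{equation*}
\var(g(x))\le \expect\!\left[\left(g(x)-\sqrt{1+\mu'}\right)^2\right],
\end{equation*}
and the bulk of the work is to show the pointwise bound
\begin{equation*}
\left(g(x)-\sqrt{1+\mu'}\right)^2 \le \frac{(x-\mu')^2}{1+\mu'} \qquad \text{a.s.}
\end{equation*}

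For the pointwise bound I would split into two cases according to the sign of $1+x$. When $x\ge -1$, the identity $\sqrt{a}-\sqrt{b}=\frac{a-b}{\sqrt{a}+\sqrt{b}}$ gives
\begin{equation*}
\left(\sqrt{1+x}-\sqrt{1+\mu'}\right)^2 = \frac{(x-\mu')^2}{\bigl(\sqrt{1+x}+\sqrt{1+\mu'}\bigr)^2}\le \frac{(x-\mu')^2}{1+\mu'},
\end{equation*}
since $\sqrt{1+x}\ge 0$ and $\mu'>-1$. When $x<-1$, the left side equals $1+\mu'$ (because $g(x)=0$), while $(x-\mu')^2>(1+\mu')^2$, so dividing by $1+\mu'$ gives $(x-\mu')^2/(1+\mu')>1+\mu'$ and the bound again holds.

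Taking expectations and using $\mu'\ge \mu$,
\begin{equation*}
\var(g(x))\le \frac{\expect[(x-\mu')^2]}{1+\mu'}=\frac{\var(x)}{1+\mu'}\le \frac{\var(x)}{1+\mu}=\var(\alpha x),
\end{equation*}
which is the desired inequality. No step looks like a serious obstacle; the only mild subtlety is handling the $[\cdot]^+$ truncation at $x<-1$, which the case split above dispatches cleanly by comparing the constant deviation $(1+\mu')$ with the lower bound $(1+\mu')^2$ on $(x-\mu')^2$ in that region.
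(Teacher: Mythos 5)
Your proof is correct, and its skeleton is the same as the paper's: replace $\var(\sqrt{[1+x]^+})$ by the second moment about the constant $\sqrt{1+\mu'}$ (with $\mu'=\expect[x]$), dominate that deviation pointwise by the deviation of the linear function $\alpha'(1+x)$ where $\alpha'=1/\sqrt{1+\mu'}$, and finally use $\mu'\geq\mu$ to pass from $\alpha'$ to $\alpha$. The only genuine difference is how the pointwise inequality $\bigl(\sqrt{[1+x]^+}-\sqrt{1+\mu'}\bigr)^2\leq (x-\mu')^2/(1+\mu')$ is established. The paper argues geometrically: it compares $f_1(x)=\sqrt{[1+x]^+}$ with the chord/tangent line $f_2(x)=\alpha'(1+x)$, invoking strict concavity of $f_1$ and the equalities $f_1(-1)=f_2(-1)$, $f_1(\mu')=f_2(\mu')$, which forces a four-case analysis over $x<-1$, $-1<x<\mu'$, $x=\mu'$, $x>\mu'$. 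You instead use the conjugate identity $\sqrt{a}-\sqrt{b}=(a-b)/(\sqrt{a}+\sqrt{b})$, which yields the bound in one line for all $x\geq -1$ and leaves only the truncated region $x<-1$, where comparing the constant deviation $1+\mu'$ with the lower bound $(1+\mu')^2$ on $(x-\mu')^2$ closes the argument exactly as you say. Your route is shorter and purely algebraic, eliminating the concavity/tangent-line bookkeeping; the paper's concavity argument is more generic in spirit (it would adapt to other concave increasing functions dominated by a chord), but for this specific square-root function your derivation is the cleaner one. All the remaining steps (the variational characterization $\var(Y)\leq\expect[(Y-c)^2]$, the identification $\var(\alpha x)=\var(x)/(1+\mu)$, and the monotonicity $1/(1+\mu')\leq 1/(1+\mu)$) are handled correctly.
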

%%%%%%%%%%%%%%
\begin{techreport}
\begin{proof}
Assume $\expect[x] = \mu'$, where $\mu' \geq \mu$. Let $\alpha' = \frac{\sqrt{1+\mu'}}{1+\mu'}$.
Let us define $f_1(x) \triangleq \sqrt{[1+x]^+}$,
and $f_2(x) \triangleq \alpha'(1+x)$. 
%%%%%%%%%%%%%%%%%%%%%%%%%%%%
\begin{singlecolumn}
Note that,
\begin{align*}
\var(\sqrt{[1+x]^+}) =   \var(f_1(x)) 
					 \leq \expect\big{[}(f_1(x) - \expect[f_2(x)])^2\big{]}
\end{align*}
since the centralized second moment is minimized around the mean.
Also,
$
\var(\alpha x) \geq \var(\alpha' x) 
			  =   \var(f_2(x))
$
since $\mu' \geq \mu$, and $\alpha' \leq \alpha$.  Therefore, it suffices to show that $\forall x$,
\begin{align}
|f_1(x) - \expect[f_2(x)]| \leq |f_2(x) - \expect[f_2(x)]| 
                           = |f_2(x) - \sqrt{1 + \mu'}|. \label{eq:lemmasqrtbound}
\end{align}
\end{singlecolumn}
%%%%%%%%%
\begin{multicolumn}
Note that,
\begin{align*}
\var(\sqrt{[1+x]^+}) =   \var(f_1(x)) 
					 \leq \expect\big{[}(f_1(x) - \expect[f_2(x)])^2\big{]}
\end{align*}
since the centralized second moment is minimized around the mean.
Also,
\begin{align*}
\var(\alpha x) \geq \var(\alpha' x) 
			  =   \var(f_2(x))
\end{align*}
since $\mu' \geq \mu$, and $\alpha' \leq \alpha$.  Therefore, it suffices to show that $\forall x$,
\begin{align}
|f_1(x) - \expect[f_2(x)]| &\leq |f_2(x) - \expect[f_2(x)]| \\
                           &= |f_2(x) - \sqrt{1 + \mu'}|. \label{eq:lemmasqrtbound}
\end{align}
\end{multicolumn}
%%%%%%%%%%%%%%%%%%%%%%%%%%%%%%%
i) First note that
$f_1(\mu') = f_2(\mu') = \sqrt{1+\mu'}$.
Therefore, the condition \eqref{eq:lemmasqrtbound} is satisfied for $x = \mu'$. \\
ii) For $x > \mu'$,
\begin{align}
f_1(x) &\leq f_1(\mu') + f_1'(\mu')(x-\mu')   \label{eq:lemmasqrt2}\\
       &\leq \sqrt{1+\mu'} + \alpha(x -\mu') = f_2(x) \nonumber
\end{align}
where $f_1'(\mu')$ is the first derivative of $f_1(x)$ at point $x = \mu'$. \eqref{eq:lemmasqrt2} follows from the fact that
$f_1(x)$ is a strictly concave function in the interval $[-1 ~ \infty)$.
Therefore, condition \eqref{eq:lemmasqrtbound} is satisfied for $x > \mu'$. \\
iii)  Combining the facts that $f_1(x)$ is a strictly concave
 function of $x$ in the interval $[-1 ~ \infty )$; $f_2(x)$
is linear; $f_1(-1) = f_2(-1)$;  and
$f_1(\mu') = f_2(\mu')$, we can see that $\sqrt{1+\mu'} > f_1(x) \geq f_2(x) $ when $ -1 < x < \mu'$. Therefore,
 condition \eqref{eq:lemmasqrtbound}
is satisfied for $ -1 < x < \mu'$.
iv) When $x< -1$, $f_1(x)=0$ and $f_2(x)<0$, therefore, condition \eqref{eq:lemmasqrtbound}
is satisfied.
This concludes the proof.
\end{proof}
\end{techreport}
%%%%%%%%%%%%%%%%%%%%%%%%%%%%%
\begin{lemma}\label{l:variancebound4}
Let $x$, $y$ be  random variables. Then,
\begin{align*}
\var\left( \expect\left[ 1 - \sqrt{(1+x)^+} ~ \big{|} y\right] \right)
    \leq \expect\left[ \big{(}\expect\big{[} ~|x|~ \big{|} ~ y \big{]}\big{)}^2 \right].
\end{align*}
\end{lemma}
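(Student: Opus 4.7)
The plan is to reduce the variance bound to a deterministic pointwise inequality of the form $|1 - \sqrt{(1+x)^+}| \leq |x|$, which I will verify by a short case analysis. Concretely, set $Z \triangleq \expect\!\left[\,1 - \sqrt{(1+x)^+}\, \big|\, y\right]$. Since $\var(Z) \leq \expect[Z^{2}]$, and by Jensen's inequality applied to $|\cdot|$ under the conditional expectation,
\begin{align*}
|Z| \;\leq\; \expect\!\left[\,\big|\,1 - \sqrt{(1+x)^+}\,\big|\;\big|\;y\right],
\end{align*}
it suffices to prove the pointwise inequality $|1 - \sqrt{(1+x)^+}| \leq |x|$ for all real $x$, since then
\begin{align*}
|Z| \;\leq\; \expect\!\left[\,|x|\;\big|\;y\right],
\end{align*}
and squaring and taking expectations yields the claim.

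For the pointwise inequality, I would split into three cases. If $x \geq 0$, then $1 - \sqrt{1+x} \leq 0$ and its magnitude is $\sqrt{1+x} - 1$, which is at most $x$ because $\sqrt{1+x} \leq 1 + x$ (square both sides). If $-1 \leq x < 0$, then $1 - \sqrt{1+x} \geq 0$, and the claim $1 - \sqrt{1+x} \leq -x$ is equivalent to $\sqrt{1+x} \leq 1$, which is immediate. If $x < -1$, then $(1+x)^+ = 0$, so the left-hand side equals $1$, while $|x| > 1$, so the inequality holds strictly. This handles all cases.

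There is no real obstacle here: the chain $\var(Z) \leq \expect[Z^{2}]$ and the conditional Jensen step are standard, and the only content is the elementary inequality $|1 - \sqrt{(1+x)^+}| \leq |x|$, which is a short case check. If I wanted to avoid cases, I could alternatively observe that $h(x) \triangleq 1 - \sqrt{(1+x)^+}$ is $1$-Lipschitz on $\real$ with $h(0) = 0$ (its derivative $-\tfrac{1}{2}(1+x)^{-1/2}$ on $x > -1$ has absolute value at most $1/2$, and $h$ is constant on $x \leq -1$), which gives $|h(x)| = |h(x) - h(0)| \leq |x|$ directly. Either route produces the same conclusion.
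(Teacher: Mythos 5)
Your main argument is correct and is essentially the paper's own proof: bound the variance by the second moment, use $|1-\sqrt{(1+x)^+}|\leq |x|$ pointwise (which the paper asserts without proof and you verify by a clean case check), then apply conditional Jensen, square, and take expectations. One caveat: your "slicker" alternative is flawed — the derivative $-\tfrac{1}{2}(1+x)^{-1/2}$ is not bounded by $1/2$ (or by $1$) as $x\to -1^+$, so $h(x)=1-\sqrt{(1+x)^+}$ is not $1$-Lipschitz on $\real$ (e.g.\ $|h(-1+\epsilon)-h(-1)|=\sqrt{\epsilon}>\epsilon$); the pointwise bound $|h(x)|\leq|x|$ nevertheless holds, but only via the case analysis (or by comparing to $0$ specifically), so you should drop or repair that remark and keep the case-check route.
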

\begin{techreport}
\begin{proof}
Note that
%%%%%%%%%%%%%%%%%%%%%%%
\begin{singlecolumn}
\begin{align*}
\var\left( \expect\left[ 1 - \sqrt{(1+x)^+} ~\big{|} y\right] \right)
  \leq \expect\left[\expect\left[ 1 - \sqrt{(1+x)^+} ~\big{|} y\right]^2 \right].
\end{align*}
\end{singlecolumn}
\begin{multicolumn}
\begin{align*}
&\var\left( \expect\left[ 1 - \sqrt{(1+x)^+} ~\big{|} y\right] \right) \\
& \quad \leq \expect\left[\expect\left[ 1 - \sqrt{(1+x)^+} ~\big{|} y\right]^2 \right].
\end{align*}
\end{multicolumn}
%%%%%%%%%%%%%%%%%%%%%%%%%%
Since for any $x$, $|1 - \sqrt{(1+x)^+} | \leq |x|$,
\begin{align*}
 \bigg{|}\expect\left[ 1 - \sqrt{(1+x)^+} ~\big{|} y\right]\bigg{|}
\leq  \expect\big{[} ~|x|~ \big| ~ y \big{]}
\end{align*}
is satisfied for any $y$, which completes the proof.
\end{proof}
\end{techreport}
%%%%%%%%%%%%%%%%%%%%%%%%%%%%%
%%%%%%%%%%%%%%%%%%%%%%%%%%%%%%%%%%%%%%%%%%%%%%%%%%%%%%%%%%%%%%%%%%%%%%%%
\begin{journal}
The proofs of Lemma~\ref{l:variancebound1}, \ref{l:variancebound3} and \ref{l:variancebound4} are provided in our technical report \cite{techreport}
due to space constraints.
\end{journal}
Now, we proceed as follows.
Assume without loss of generality that $\rho_{\min} = \min(\rho_1,\rho_2) = \rho_1$.
When $\hat{\phiv}_e \neq \emptyset$,
%%%%%%%%%%%%%%%%%%%%%%%
\begin{singlecolumn}
\begin{align}
R_U = & \lim_{n\to\infty}\frac{1}{n}  I(\hat{\dv}_1;\hat{\dv}_2| \hat{\dv}_{1e}, \hat{\dv}_{2e},\hat{\phiv}_e)  
    \leq \lim_{n\to\infty}\frac{1}{n} \big{(} h(\hat{\dv}_1|\hat{\dv}_{1e}, \hat{\dv}_{2e},\hat{\phiv}_e) - h(\hat{\dv}_1|\dv_{12}) \big{)}
 \label{app:RUboundMarkovd12}\\
    \leq & \lim_{n\to\infty}\frac{1}{n} \sum_{i=1}^n\bigg{(}  h(\hat{d}_1[i] | \hat{d}_{1e}[i], \hat{d}_{2e}[i], \hat{\phi}_e[i] ) - 
    h(\hat{d}_1[i] - d_{12}[i] | d_{12}[i])\bigg{)} \nonumber\\
       = & h(\hat{d}_1|\hat{d}_{1e},\hat{d}_{1e},\hat{\phi}_e) - h(\hat{d}_1 - d_{12}|d_{12})\label{app:RUboundmaineqn}
\end{align}
\end{singlecolumn}
\begin{multicolumn}
\begin{align}
R_U = & \lim_{n\to\infty}\frac{1}{n}  I(\hat{\dv}_1;\hat{\dv}_2| \hat{\dv}_{1e}, \hat{\dv}_{2e},\hat{\phiv}_e)  \nonumber\\
    \leq & \lim_{n\to\infty}\frac{1}{n} \big{(} h(\hat{\dv}_1|\hat{\dv}_{1e}, \hat{\dv}_{2e},\hat{\phiv}_e) - h(\hat{\dv}_1|\dv_{12}) \big{)}
 \label{app:RUboundMarkovd12}\\
    \leq & \lim_{n\to\infty}\frac{1}{n} \sum_{i=1}^n\bigg{(}  h(\hat{d}_1[i] | \hat{d}_{1e}[i], \hat{d}_{2e}[i], \hat{\phi}_e[i] ) - \nonumber\\
   & h(\hat{d}_1[i] - d_{12}[i] | d_{12}[i])\bigg{)} \nonumber\\
       = & h(\hat{d}_1|\hat{d}_{1e},\hat{d}_{1e},\hat{\phi}_e) - h(\hat{d}_1 - d_{12}|d_{12})\label{app:RUboundmaineqn}
\end{align}
\end{multicolumn}
%%%%%%%%%%%%%%%%%%%%%%%%%%
where \eqref{app:RUboundMarkovd12} follows from the fact that $\hat{\dv}_1 \rightarrow \dv_{12}\rightarrow (\hat{\dv}_2,\hat{\dv}_{1e},\hat{\dv}_{2e},\hat{\phiv}_e)$ forms a Markov chain, and \eqref{app:RUboundmaineqn} follows from the fact that
all of the random variables $\hat{d}_1[i], \hat{d}_{1e}[i], \hat{d}_{2e}[i], \hat{\phi}_e[i]$ have a stationary distribution, denoted as
$\hat{d}_1, \hat{d}_{1e}, \hat{d}_{2e}$ and $\hat{\phi}_e$, respectively.
The second term in \eqref{app:RUboundmaineqn} can be found as
\begin{align}
h(\hat{d}_1 - d_{12}|d_{12}) = \frac{1}{2} \expect \left[ \log\left( \frac{\gamma(d_{12})\rho_1}{P} \right) \right] \label{entropyw1}
\end{align}
from the definition of $\hat{d}_1[i]$. Now, we bound the first term in \eqref{app:RUboundmaineqn}.
Let us define $$\hat{d}_e \triangleq \sqrt{[\hat{d}_{1e}^2 + \hat{d}_{2e}^2 - 2\hat{d}_{1e}\hat{d}_{2e}\cos(\hat{\phi}_e)]^+ }.$$ 
Then,
\begin{align}
h(\hat{d}_1 |  \hat{d}_{1e}, \hat{d}_{2e}, \hat{\phi}_e )
              & \leq h \big{(} \hat{d}_1 | \hat{d}_e  \big{)} 
               \leq h \big{(} \hat{d}_1 - \hat{d}_e  \big{)}.  \label{RUbound1}
\end{align}
Note that for a given variance, Gaussian distribution maximizes the entropy.
Therefore, the entropy of a Gaussian random variable that has a variance
identical to that of $\hat{d}_1-\hat{d}_e$ will be an upper bound for
\eqref{RUbound1}. We proceed as follows.
%%%%%%%%%%%%%%%%%%%%%%%%
\begin{singlecolumn}
\begin{align}
\var\big{(}  \hat{d}_1 - \hat{d}_e  \big{)} =
  \expect\big{[} \var\big{(} \hat{d}_1 - \hat{d}_e |d_{12},d_{1e},d_{2e} \big{)}\big{]} 
 +\var \big{(} \expect\big{[}  \hat{d}_1 - \hat{d}_e |   d_{12},d_{1e},d_{2e}  \big{]} \big{)}\label{app:variance},
\end{align}
\end{singlecolumn}
\begin{multicolumn}
\begin{align}
\var\big{(}  \hat{d}_1 - \hat{d}_e  \big{)} =
&  \expect\big{[} \var\big{(} \hat{d}_1 - \hat{d}_e |d_{12},d_{1e},d_{2e} \big{)}\big{]}  \nonumber\\
&  +\var \big{(} \expect\big{[}  \hat{d}_1 - \hat{d}_e |   d_{12},d_{1e},d_{2e}  \big{]} \big{)}\label{app:variance},
\end{align}
\end{multicolumn}
%%%%%%%%%%%%%%%%%%%%%%%%
where \eqref{app:variance} follows from the fact that for any dependent random variables $x$ and $y$,
$\var(x) = \expect[\var(x|y)]+ \var(\expect[x|y])$.
We now find an upper bound on the first term of \eqref{app:variance}. Note that,
%%%%%%%%%%%%%%%%%%%%%%%%
\begin{singlecolumn}
\begin{align}
 \var\big{(} \hat{d}_1 - \hat{d}_e |d_{12},d_{1e},d_{2e} \big{)}  =  \var(\hat{d}_1|d_{12})  
		 + \var\big{(} \hat{d}_e | d_{12},d_{1e},d_{2e} \big{)}  \label{RUbound2}
\end{align}
\end{singlecolumn}
\begin{multicolumn}
\begin{align}
 \var\big{(} \hat{d}_1 - \hat{d}_e |d_{12},d_{1e},d_{2e} \big{)}  = & \var(\hat{d}_1|d_{12}) \nonumber\\ 
		& + \var\big{(} \hat{d}_e | d_{12},d_{1e},d_{2e} \big{)}  \label{RUbound2}
\end{align}
\end{multicolumn}
%%%%%%%%%%%%%%%%%%%%%%%%
due to Lemma~\ref{l:variancebound1}, since 
$\hat{d}_1 \rightarrow (d_{12},d_{1e},d_{2e})\rightarrow(\hat{d}_{1e},\hat{d}_{2e},\hat{\phi}_e)\rightarrow \hat{d}_e$
forms a Markov chain, and the fact that $\hat{d}_1$ is independent of ${d}_{1e},d_{2e}$ given $d_{12}$.
%$\hat{d}_1$ and $(\hat{d}_{1e},\hat{d}_{2e},\hat{\phi}_e)$ are independent given $(d_{12},d_{1e},d_{2e})$.
The first term in \eqref{RUbound2} is equal to
\begin{align}
\var(\hat{d}_1 | d_{12})     = \var\big{(}w_1 | d_{12} \big{)}  = \frac{\gamma(d_{12})\rho_1}{P}. 
\label{app:varhatd1}
\end{align}
We bound the second term in \eqref{RUbound2} as follows. Let us define 
%%%%%%%%%%%%%%%%%%%%%%%%
\begin{singlecolumn}
\begin{align*}
\kappa &\triangleq \frac{1}{d_{12}^2} \bigg{(} 2(d_{1e}-d_{2e}\cos(\hat{\phi_e})) w_{1e} + {w_{1e}}^2 
  + {w_{2e}}^2 + 2(d_{2e}-d_{1e}\cos(\hat{\phi}_e)) w_{2e} \nonumber\\
 & ~ + 2d_{1e}d_{2e} (\cos(\phi_e) - \cos(\hat{\phi}_e))  
  -  2 {w_{1e}}{w_{2e}} \cos(\hat{\phi}_e)\bigg{)}.
\end{align*}
\end{singlecolumn}
\begin{multicolumn}
\begin{align*}
\kappa &\triangleq \frac{1}{d_{12}^2} \bigg{(} 2(d_{1e}-d_{2e}\cos(\hat{\phi_e})) w_{1e} + {w_{1e}}^2 \nonumber\\
 & ~ + {w_{2e}}^2 + 2(d_{2e}-d_{1e}\cos(\hat{\phi}_e)) w_{2e} \nonumber\\
 & ~ + 2d_{1e}d_{2e} (\cos(\phi_e) - \cos(\hat{\phi}_e))  
  -  2 {w_{1e}}{w_{2e}} \cos(\hat{\phi}_e)\bigg{)}.
\end{align*}
\end{multicolumn}
%%%%%%%%%%%%%%%%%%%%%%%%
Then,
%%%%%%%%%%%%%%%%%%%%%%%%
\begin{singlecolumn}
\begin{align}
 \var ( \hat{d}_e | d_{12},d_{1e},d_{2e} ) 
 &= \var\bigg{\{} \bigg{(} \bigg{[}d_{1e}^2 +d_{2e}^2
           + 2(d_{1e}-d_{2e}\cos\hat{\phi_e}) w_{1e} +
2(d_{2e}-d_{1e}\cos\hat{\phi}_e) w_{2e} \nonumber\\
 &   ~ - 2d_{1e}d_{2e} \cos\hat{\phi}_e + {w_{1e}}^2 
  + {w_{2e}}^2 -  2 {w_{1e}}{w_{2e}} \cos\hat{\phi}_e
\bigg{]}^+ \bigg{)}^{0.5}  | d_{12},d_{1e},d_{2e}  \bigg{\}}  \label{RUeqn1}\\
 & \leq d_{12}^2\var (\sqrt{[1 + \kappa]^+}), \label{RUeqn2}
\end{align}
\end{singlecolumn}
\begin{multicolumn}
\begin{align}
& \quad \var ( \hat{d}_e | d_{12},d_{1e},d_{2e} )  \nonumber\\
 &= \var\bigg{\{} \bigg{(} \bigg{[}d_{1e}^2 +d_{2e}^2
           + 2(d_{1e}-d_{2e}\cos\hat{\phi_e}) w_{1e} + \nonumber\\
 & ~   2(d_{2e}-d_{1e}\cos\hat{\phi}_e) w_{2e}- 2d_{1e}d_{2e} \cos\hat{\phi}_e + {w_{1e}}^2 \nonumber\\
 & ~   + {w_{2e}}^2 -  2 {w_{1e}}{w_{2e}} \cos\hat{\phi}_e
\bigg{]}^+ \bigg{)}^{0.5}  | d_{12},d_{1e},d_{2e}  \bigg{\}}  \label{RUeqn1}\\
 & \leq d_{12}^2\var (\sqrt{[1 + \kappa]^+}), \label{RUeqn2}
\end{align}
\end{multicolumn}
%%%%%%%%%%%%%%%%%%%%%%%%
where \eqref{RUeqn1} follows due to the definitions of $\hat{d}_{1e}$, $\hat{d}_{2e}$ and $\hat{\phi}_{e}$.
\eqref{RUeqn2} follows due to definition of $\kappa$, and the cosine law $d_{12}^2 = d_{1e}^2 + d_{2e}^2 - 2d_{1e}d_{2e}\cos(\phi_e)$.
Now we will apply Lemma~\ref{l:variancebound3} to bound \eqref{RUeqn2}. First, note that
\begin{align}
& \expect[\kappa] =\frac{1}{d_{12}^2}\expect\bigg{[}w_{1e}^2 + w_{2e}^2 - 2d_{1e}d_{2e} (\cos\phi_e - \cos\hat{\phi}_e)\bigg{]} \nonumber\\
           &\geq \frac{1}{P d_{12}^2}\expect\bigg{[} w_{1e}^2 + w_{2e}^2 - 2d_{1e}d_{2e}|w_{\phi_e}| \bigg{]} \nonumber\\
           &\geq \frac{\rho_e}{P d_{12}^2}\expect\bigg{[} \gamma(d_{1e}) + \gamma(d_{2e}) 
- 2d_{1e}d_{2e}\sqrt{\frac{2 P \gamma_{\phi}(d_{1e},d_{2e})}{\pi\rho_e}} \bigg{]}, \label{app:halfnormal1}
\end{align}
where \eqref{app:halfnormal1} follows from the fact that since $w_{\phi_e}$ is zero mean Gaussian,
$|w_{\phi_e}|$ follows a Half-normal distribution
with $\expect(|w_{\phi_e}|) = \sqrt{\frac{2 \gamma_{\phi}(d_{1e},d_{2e})\rho_e}{P\pi}}$.
We can choose $P_1$ such that for any beacon power $P > P_1$, $\expect[x]> -\frac{3}{4}$. Let $\mu = -\frac{3}{4}$, and $\alpha = \frac{\sqrt{1+\mu}}{1+\mu}=2$.
Due to Lemma~\ref{l:variancebound3}, we obtain
%%%%%%%%%%%%%%%%%%%%%%%%
\begin{singlecolumn}
\begin{align}
 d_{12}^2\var &(\sqrt{[1+\kappa]^+})\leq  d_{12}^2\var(\alpha \kappa) \nonumber\\
       &\leq  \frac{4\alpha^2}{d_{12}^2}\bigg{(} \var(2(d_{1e}-d_{2e}\cos(\hat{\phi_e})) w_{1e}) 
          + \var(2(d_{2e}-d_{1e}\cos(\hat{\phi}_e)) w_{2e}) \label{RUbound_variance_expand}\\
       & + \var(2d_{1e}d_{2e} (\cos(\phi_e) - \cos(\hat{\phi}_e))) 
         + \var(2 {w_{1e}}{w_{2e}} \cos(\hat{\phi}_e))
         + \var(w_{1e}^2) + \var(w_{2e}^2) \bigg{)} \nonumber\\
        \leq& \expect\bigg{[}\frac{16 \rho_e}{P d_{12}^2}\bigg{(} 4(d_{1e}+d_{2e})^2 (\gamma(d_{1e})+\gamma(d_{2e})) 
         + 4(d_{1e}d_{2e})^2 \gamma_{\phi}(d_{1e},d_{2e}) + o\left(\frac{1}{P}\right) \bigg{)}\bigg{]}, \label{RUbound_variance_expand3}
\end{align}
\end{singlecolumn}
\begin{multicolumn}
\begin{align}
d_{12}^2&\var(\sqrt{[1+\kappa]^+}) \leq  d_{12}^2\var(\alpha \kappa) \nonumber\\
                             \leq&  \frac{4\alpha^2}{d_{12}^2}\bigg{(} \var(2(d_{1e}-d_{2e}\cos(\hat{\phi_e})) w_{1e}) \label{RUbound_variance_expand}\\
                        & + \var(2(d_{2e}-d_{1e}\cos(\hat{\phi}_e)) w_{2e}) \nonumber\\
                        & + \var(2d_{1e}d_{2e} (\cos(\phi_e) - \cos(\hat{\phi}_e))) \nonumber\\
                        & + \var(2 {w_{1e}}{w_{2e}} \cos(\hat{\phi}_e))
                         + \var(w_{1e}^2) + \var(w_{2e}^2) \bigg{)} \nonumber\\
                             \leq& \expect\bigg{[}\frac{16 \rho_e}{P d_{12}^2}\bigg{(} 4(d_{1e}+d_{2e})^2 (\gamma(d_{1e})+\gamma(d_{2e})) \\
                        & + 4(d_{1e}d_{2e})^2 \gamma_{\phi}(d_{1e},d_{2e}) + o\left(\frac{1}{P}\right) \bigg{)}\bigg{]}, \label{RUbound_variance_expand3}
\end{align}
\end{multicolumn}
%%%%%%%%%%%%%%%%%%%%%%%%
where \eqref{RUbound_variance_expand} follows from applying Lemma~\ref{l:variancebound1} to $\var(\kappa)$ twice, and
\eqref{RUbound_variance_expand3} follows from the fact that  
$$\var(2(d_{ie}-d_{je}\cos(\hat{\phi}_e)) w_{ie}) \leq \var(2(d_{ie}+d_{je}) w_{ie})$$ for $i,j \in \{1,2\}$,
and 
%%%%%%%%%%%%%%%%%%%%%%%%
\begin{singlecolumn}
\begin{align*}
\var&(2d_{1e}d_{2e} (\cos(\phi_e) - \cos(\hat{\phi}_e))) 
   =  \var(2d_{1e}d_{2e} (\cos(\phi_e) - \cos(\phi_e + w_{\phi} )) 
	\leq  \var(2d_{1e}d_{2e} w_{\phi}).
\end{align*}
\end{singlecolumn}
\begin{multicolumn}
\begin{align*}
\var&(2d_{1e}d_{2e} (\cos(\phi_e) - \cos(\hat{\phi}_e))) \\
	&=  \var(2d_{1e}d_{2e} (\cos(\phi_e) - \cos(\phi_e + w_{\phi} )) \\
	&\leq  \var(2d_{1e}d_{2e} w_{\phi}).
\end{align*}
\end{multicolumn}
%%%%%%%%%%%%%%%%%%%%%%%%
Now, we upper bound the second term of \eqref{app:variance} as 
%%%%%%%%%%%%%%%%%%%%%%%%
\begin{singlecolumn}
\begin{align}
&\var \big{(} \expect \big{[} \hat{d}_1 - \hat{d}_e | d_{12},d_{1e},d_{2e} \big{]} \big{)}   
   = \var \bigg{(} d_{12}\expect \bigg{[} 1- \sqrt{\left(1 + \kappa \right)^+ } 
     | d_{12},d_{1e},d_{2e} \bigg{]} \bigg{)} \label{app:expectation1} \\
   &\leq \expect\left[ d_{12}^2 \left( \expect\big{[}|x|~|d_{12},d_{1e},d_{2e} \big{]}\right)^2\right] \label{app:expectation2}\\
&\leq \expect \bigg{[} \frac{1}{d_{12}^2} \expect\big{[}
  2(d_{1e} + d_{2e})(|w_{1e}|+|w_{2e}|) + w_{1e}^2 + w_{2e}^2 
  + 2d_{1e}d_{2e}|w_{\phi_e}| + 2 |w_{1e}w_{2e}|~ | d_{12},d_{1e},
  d_{2e}\big{]}^2 \bigg{]} \nonumber\\
& = \expect\bigg{[}\frac{\rho_e^2}{Pd_{12}^2} \bigg{(}2(d_{1e}+d_{2e})
\frac{\sqrt{\gamma(d_{1e})}+\sqrt{\gamma(d_{2e})}}{\sqrt{\rho_e}} 
   + 2d_{1e}d_{2e}\gamma_{\phi}(d_{1e},d_{2e}) \nonumber\\
   & ~ ~ + \frac{\gamma(d_{1e})+\gamma(d_{2e})}{\sqrt{P}} + 2\frac{\sqrt{\gamma(d_{1e})+\gamma(d_{2e}) }}{\sqrt{P\rho_e}} \bigg{)}^2 \bigg{]} \label{app:expectation3} \\
& =\expect\bigg{[}\frac{\rho_e^2}{Pd_{12}^2} \bigg{(}\frac{4(d_{1e}+d_{2e})^2}{\rho_e}
(\sqrt{\gamma(d_{1e})}+\sqrt{\gamma(d_{2e})})^2 
  + 4(d_{1e}d_{2e}\gamma_{\phi}(d_{1e},d_{2e}))^2 \nonumber\\
& ~ ~  + \frac{8(d_{1e}+d_{2e})d_{1e}d_{2e}}{\rho_e}\big{(}\sqrt{\gamma(d_{1e})} 
\quad +\sqrt{\gamma(d_{2e})} \big{)}\sqrt{\gamma_{\phi}(d_{1e},d_{2e})}
\bigg{)}\bigg{]} + o(1/P) \label{RUeqn6}.
\end{align}
\end{singlecolumn}
\begin{multicolumn}
\begin{align}
&\var \big{(} \expect \big{[} \hat{d}_1 - \hat{d}_e | d_{12},d_{1e},d_{2e} \big{]} \big{)}    \nonumber\\
&= \var \bigg{(} d_{12}\expect \bigg{[} 1- \sqrt{\left(1 + \kappa \right)^+ } | d_{12},d_{1e},d_{2e} \bigg{]} \bigg{)} \label{app:expectation1} \\
&\leq \expect\left[ d_{12}^2 \left( \expect\big{[}|x|~|d_{12},d_{1e},d_{2e} \big{]}\right)^2\right] \label{app:expectation2}\\
&\leq \expect \bigg{[} \frac{1}{d_{12}^2} \expect\big{[}
2(d_{1e} + d_{2e})(|w_{1e}|+|w_{2e}|) + w_{1e}^2 + w_{2e}^2 \nonumber\\
& + 2d_{1e}d_{2e}|w_{\phi_e}| + 2 |w_{1e}w_{2e}|~ | d_{12},d_{1e},
d_{2e}\big{]}^2 \bigg{]} \nonumber\\
& = \expect\bigg{[}\frac{\rho_e^2}{Pd_{12}^2} \bigg{(}2(d_{1e}+d_{2e})
\frac{\sqrt{\gamma(d_{1e})}+\sqrt{\gamma(d_{2e})}}{\sqrt{\rho_e}} \nonumber\\
&+ 2d_{1e}d_{2e}\gamma_{\phi}(d_{1e},d_{2e}) \nonumber\\
&+ \frac{\gamma(d_{1e})+\gamma(d_{2e})}{\sqrt{P}} + 2\frac{\sqrt{\gamma(d_{1e})+\gamma(d_{2e}) }}{\sqrt{P\rho_e}} \bigg{)}^2 \bigg{]} \label{app:expectation3} \\
& =\expect\bigg{[}\frac{\rho_e^2}{Pd_{12}^2} \bigg{(}\frac{4(d_{1e}+d_{2e})^2}{\rho_e}
(\sqrt{\gamma(d_{1e})}+\sqrt{\gamma(d_{2e})})^2 \nonumber\\
& + 4(d_{1e}d_{2e}\gamma_{\phi}(d_{1e},d_{2e}))^2 \nonumber\\
&  + \frac{8(d_{1e}+d_{2e})d_{1e}d_{2e}}{\rho_e}\big{(}\sqrt{\gamma(d_{1e})} \nonumber\\
& \quad +\sqrt{\gamma(d_{2e})} \big{)}\sqrt{\gamma_{\phi}(d_{1e},d_{2e})}
\bigg{)}\bigg{]} + o(1/P) \label{RUeqn6}.
\end{align}
\end{multicolumn}
%%%%%%%%%%%%%%%%%%%%%%%%
where \eqref{app:expectation1} follows from the fact that $\hat{d}_e = d_{12}\sqrt{(1+\kappa)^+}$, and
\eqref{app:expectation2} follows from Lemma~\ref{l:variancebound4}.
%Equation \eqref{app:expectation3} follows from the fact that
% $|w_{\phi_e}|$ follows a Half-normal distribution
% $\expect(|w_{\phi}|) \leq \sigma_{\phi}$.
Finally,  we obtain
%%%%%%%%%%%%%%%%%%%%%%%%
\begin{singlecolumn}
\begin{align}
 R_U &\leq h(\hat{d}_1 - \hat{d}_e) - h(\hat{d}_1 - d_{12} | d_{12}) \label{app:RUfinaleqn0}\\
	& \leq \frac{1}{2}\log\left(2\pi \var(\hat{d}_1-\hat{d}_e)\right)
- h(w_{1}|d_{12}) \label{app:RUfinaleqn1}\\
&= \frac{1}{2} \log\bigg{\{}
  2\pi\expect \bigg{[} \frac{\rho_e}{d_{12}^2P} \bigg{(}
 \frac{d_{12}^2\rho_1}{\rho_e} \gamma(d_{12}) 
 +  4(d_{1e}+d_{2e})^2 (\sqrt{\gamma(d_{1e})}
+\sqrt{\gamma(d_{2e})})^2 \nonumber\\
& + (4 d_{1e}d_{2e}+ 64(d_{1e}d_{2e})^2)\gamma_{\phi}(d_{1e},d_{2e}) 
    + 8(d_{1e}+d_{2e})d_{1e}d_{2e}\big{(}\sqrt{\gamma(d_{1e})} 
   +\sqrt{\gamma(d_{2e})} \big{)}\sqrt{\gamma_{\phi}(d_{1e},d_{2e})}
 \nonumber\\
&+ 64 (d_{1e}+d_{2e})^2 (\gamma(d_{1e})+\gamma(d_{2e})) \bigg{)}  + o\left(\frac{1}{P}\right) \bigg{]} \bigg{\}}  - \frac{1}{2}\expect\bigg{[}\log\left(\frac{2\pi\rho_1\gamma( d_{12})}{P}\right) \bigg{]}
\label{app:RUfinaleqn2},
 \end{align}
\end{singlecolumn}
\begin{multicolumn}
\begin{align}
 R_U &\leq h(\hat{d}_1 - \hat{d}_e) - h(\hat{d}_1 - d_{12} | d_{12}) \label{app:RUfinaleqn0}\\
	& \leq \frac{1}{2}\log\left(2\pi \var(\hat{d}_1-\hat{d}_e)\right)
- h(w_{1}|d_{12}) \label{app:RUfinaleqn1}\\
&= \frac{1}{2} \log\bigg{\{}
  2\pi\expect \bigg{[} \frac{\rho_e}{d_{12}^2P} \bigg{(}
 \frac{d_{12}^2\rho_1}{\rho_e} \gamma(d_{12}) \nonumber\\
& +  4(d_{1e}+d_{2e})^2 (\sqrt{\gamma(d_{1e})}
+\sqrt{\gamma(d_{2e})})^2 \nonumber\\
& + (4 d_{1e}d_{2e}+ 64(d_{1e}d_{2e})^2)\gamma_{\phi}(d_{1e},d_{2e}) \nonumber\\
&  + 8(d_{1e}+d_{2e})d_{1e}d_{2e}\big{(}\sqrt{\gamma(d_{1e})} \nonumber\\
&  +\sqrt{\gamma(d_{2e})} \big{)}\sqrt{\gamma_{\phi}(d_{1e},d_{2e})}
 \nonumber\\
&+ 64 (d_{1e}+d_{2e})^2 (\gamma(d_{1e})+\gamma(d_{2e})) \bigg{)}  + o\left(\frac{1}{P}\right) \bigg{]} \bigg{\}} \nonumber\\
& - \frac{1}{2}\expect\bigg{[}\log\left(\frac{2\pi\rho_1\gamma( d_{12})}{P}\right) \bigg{]}
\label{app:RUfinaleqn2},
 \end{align}
\end{multicolumn}
%%%%%%%%%%%%%%%%%%%%%%%%
where 
\eqref{app:RUfinaleqn0} follows from \eqref{app:RUboundmaineqn} and \eqref{RUbound1},
\eqref{app:RUfinaleqn1} follows from the fact that entropy of $\hat{d}_1 - \hat{d}_e$ is upper bounded by
the entropy of a Gaussian random variable that has the same variance as $\hat{d}_1 - \hat{d}_e$.
The first term of \eqref{app:RUfinaleqn2} is obtained by combining combining
\eqref{app:varhatd1}, \eqref{RUbound_variance_expand3} and \eqref{RUeqn6}, and the second term of
\eqref{app:RUfinaleqn2} follows from \eqref{entropyw1}. As $P \to \infty$, the $P$
terms in \eqref{app:RUfinaleqn2} cancel each other since for any random variables $u$ and $v$,
\begin{align*}
\lim_{P\to\infty}\log\expect\left[ \frac{u}{P} + o\bigg{(}\frac{1}{P} \bigg{)} \right] - \expect\left[\log \frac{v}{P} \right]
 = \log\expect[u] - \expect[\log v]
\end{align*}
hence $\lim_{P\to\infty}R_U < \infty$.
%%%%%%%%%%%%%%%%%%%%%%%%%%%%%%%%%%%%%%%%%%%%%%%%%%%%%%%%%%%%%%%%%%%%%%%%%%%%%%

\subsection{Proof of Theorem~\ref{t:RLbound}}

When the eavesdropper does not observe the angle, $\hat{\phiv}_e = \emptyset$. Hence
\begin{align}
R_L = \lim_{n\to\infty}\frac{1}{n} \left( h(\hat{\dv}_1|\hat{\dv}_{1e},\hat{\dv}_{2e})
                                         - h(\hat{\dv}_{1}|\hat{\dv }_{2}) \right)
                                         \label{app:RLeqn}
\end{align}
First, we show that the first term in \eqref{app:RLeqn} is finite.
%%%%%%%%%%%%%%%%%%%%%%%%
\begin{singlecolumn}
\begin{align}
&\lim_{P, n \to \infty}\frac{1}{n}h(\hat{\dv}_{1}|\hat{\dv}_{1e},\hat{\dv}_{2e})
	 = \lim_{n\to\infty}\frac{1}{n}h(\dv_{12}|\dv_{1e},\dv_{2e}) 
	= \lim_{n\to\infty}\frac{1}{n}\sum_{i=1}^n h(d_{12}[i] | \dv_{1e},\dv_{2e},\{d_{12}[j]\}_{j=1}^{i-1}) \nonumber\\
    &=\lim_{n\to\infty}\frac{1}{n} \sum_{i=1}^n h(d_{12}[i] | \dv_{1e},\dv_{2e},\{\phi_e[j]\}_{j=1}^{i-1}) \label{app:RLbound1}\\
    &= \lim_{n\to\infty}\frac{1}{n}\sum_{i=1}^n h\bigg{(}
\big{(}d_{1e}[i]^2+d_{2e}[i]^2- 
2d_{1e}[i]d_{2e}[i]\cos(\phi_e[i])\big)^{0.5}|\dv_{1e},\dv_{2e},\{\phi_e[j]\}_{j=1}^{i-1}\bigg{)} 
 > -\infty  \label{app:RLbound3}
\end{align}
\end{singlecolumn}
\begin{multicolumn}
\begin{align}
&\lim_{P, n \to \infty}\frac{1}{n}h(\hat{\dv}_{1}|\hat{\dv}_{1e},\hat{\dv}_{2e})
	 = \lim_{n\to\infty}\frac{1}{n}h(\dv_{12}|\dv_{1e},\dv_{2e})  \nonumber\\
	&= \lim_{n\to\infty}\frac{1}{n}\sum_{i=1}^n h(d_{12}[i] | \dv_{1e},\dv_{2e},\{d_{12}[j]\}_{j=1}^{i-1}) \nonumber\\
    &=\lim_{n\to\infty}\frac{1}{n} \sum_{i=1}^n h(d_{12}[i] | \dv_{1e},\dv_{2e},\{\phi_e[j]\}_{j=1}^{i-1}) \label{app:RLbound1}\\
    &= \lim_{n\to\infty}\frac{1}{n}\sum_{i=1}^n h\bigg{(}
\big{(}d_{1e}[i]^2+d_{2e}[i]^2- \nonumber\\
&2d_{1e}[i]d_{2e}[i]\cos(\phi_e[i])\big)^{0.5}|\dv_{1e},\dv_{2e},\{\phi_e[j]\}_{j=1}^{i-1}\bigg{)}
 > -\infty  \label{app:RLbound3}
\end{align}
\end{multicolumn}
%%%%%%%%%%%%%%%%%%%%%%%%
where \eqref{app:RLbound1} follows from the fact that a triangle is completely characterized by either three sides $(d_{12}[i]\,d_{1e}[i],d_{2e}[i])$, or two sides and an angle $(d_{12}[i]\,d_{1e}[i],\phi_e[i])$.
 Equation \eqref{app:RLbound3} follow from the cosine law.
Since the probability density function of $\phiv_e, \dv_{12}$ and $\dv_{1e},\dv_{2e}$ are well defined,
we can see that $h(\phi_e[i]|\dv_{1e},\dv_{2e},\{\phi_e[j]\}_{j=1}^{i-1})>-\infty$.
The second term
\begin{align}
\frac{1}{n} h(\hat{\dv}_1|\hat{\dv}_2)
         &\leq \frac{1}{n}h(\hat{\dv}_1 - \hat{\dv}_2) 
         = h(w_1 - w_2) \label{app:RLbound3.5}\\
         &\leq \frac{1}{2} \log\left(  2\pi \expect\left[ \frac{4\rho_{\max}\gamma(d_{12})}{P} \right] \right) \label{app:RLbound4} \\
         &=\frac{1}{2} \log\big{(} 2\pi \expect\left[ 4 \rho_{\max}\gamma(d_{12}) \right] \big{)}
         - \frac{1}{2} \log(P). \nonumber
\end{align}
where \eqref{app:RLbound3.5} follows due to the fact that conditioning reduces entropy, \eqref{app:RLbound4} follows from the fact 
$\hat{d}_{1}[i]-\hat{d}_2[i] = w_{1}[i] - w_{2}[i]$ for any $i$. Dropping the index $i$, we can see 
that $h(w_1 - w_2)$ is upper bounded by entropy of a Gaussian random variable that has the same variance as $w_1 - w_2$, which is
\begin{align*}
\var(w_1 - w_2) &= \var(\expect[w_1 - w_2]) + \expect(\var[w_1 - w_2]) \\
				&= \expect_{d_{12}}(\var[w_1 - w_2])
				\leq \expect\left[ \frac{4\rho_{\max}\gamma(d_{12})}{P} \right]. 
\end{align*}
 Therefore, we can see that
$
\lim_{P \to \infty} \frac{R_L}{\frac{1}{2} \log (P) }  \geq 1$
Now, we find an upper bound on $R_U$.
Note that
%%%%%%%%%%%%%%%%%%%%%%%%
\begin{singlecolumn}
\begin{align}
R_U =  \lim_{n\to\infty}\frac{1}{n}  I(\hat{\dv}_1;\hat{\dv}_2| \hat{\dv}_{1e}, \hat{\dv}_{2e})  
    \leq  \lim_{n\to\infty}\frac{1}{n} \big{(} h(\hat{\dv}_1|\hat{\dv}_{1e},
 \hat{\dv}_{2e}) - h(\hat{\dv}_1|\dv_{12}) \label{app:RUeqn}
\end{align}
\end{singlecolumn}
\begin{multicolumn}
\begin{align}
R_U = & \lim_{n\to\infty}\frac{1}{n}  I(\hat{\dv}_1;\hat{\dv}_2| \hat{\dv}_{1e}, \hat{\dv}_{2e})  \nonumber\\
    \leq & \lim_{n\to\infty}\frac{1}{n} \big{(} h(\hat{\dv}_1|\hat{\dv}_{1e},
 \hat{\dv}_{2e}) - h(\hat{\dv}_1|\dv_{12}) \label{app:RUeqn}
\end{align}
\end{multicolumn}
%%%%%%%%%%%%%%%%%%%%%%%%
where the first term of \eqref{app:RUeqn} is finite.
The second term can be upper bounded as
%%%%%%%%%%%%%%%%%%%%%%%%
\begin{singlecolumn}
\begin{align*}
\frac{1}{n} h(\hat{\dv}_1|{\dv}_{12})
         = h(w_1|d_{12}) 
         \geq \expect\left[\frac{1}{2}\log\left( \frac{2\pi\rho_1 \gamma(d_{12})}{P} \right) \right] 
         =\frac{1}{2}\expect\left[\log\left(2\pi\rho_1\gamma(d_{12})\right) \right] -\frac{1}{2}\log(P),
\end{align*}
\end{singlecolumn}
\begin{multicolumn}
\begin{align*}
\frac{1}{n} h(\hat{\dv}_1|{\dv}_{12})
         & = h(w_1|d_{12}) \\
         &\geq \expect\left[\frac{1}{2}\log\left( \frac{2\pi\rho_1 \gamma(d_{12})}{P} \right) \right] \\
         &=\frac{1}{2}\expect\left[\log\left(2\pi\rho_1\gamma(d_{12})\right) \right] -\frac{1}{2}\log(P),
\end{align*}
\end{multicolumn}
%%%%%%%%%%%%%%%%%%%%%%%%
therefore
$\lim_{P \to \infty} \frac{R_U}{\frac{1}{2} \log (P) }  \leq 1$.
Since $R_L \leq R_U$ by definition, the proof is complete.
%%%%%%%%%%%%%%%%%%%%%%%%%%%%%%%%%%%%%%%%%%%%%%%%%%%%%%%%%%%%%%%%%%%%%%%%%%%%%%
\begin{techreport}
%%%%%%%%%%%%%%%%%%%%%%%%%%%%%%%%%%%%%%%%%%%%%%%%%%%%%%%%%%%%%%%%%%%%%%%%%%%%%%
%%%%%%%%%%%%%%%%%%%%%%%%%%%%%%%%%%%%%%%%%%%%%%%%%%%%%%%%%%%%%%%%%%%%%%%%%%%%%%
\section{On Observation Bias}\label{app:practical}
Nodes' observations may have bias due to several factors. We consider
two different source of bias; clock mismatch and multipath fading. We will see that
different types of bias may have different outcomes.
In this part, we present our results for no GLI. However, the conclusions are valid for perfect GLI as well.
%%%%%%%%%%%%%%%%%%%%%%%%%%%%%%%%%%%%%%%%%%%%%%%%%%%%%%%%%%%%%%%%%%%%%%%%%%%%%%%%%%%%%%%%%%
\subsection{Clock Mismatch}
Assume that there is a clock mismatch between nodes $1$, and $2$. Consequently,
all the observations of $d_{12}$ of nodes $1$ and $2$ in localization phase
are shifted by a random value $\eta_1$ and $\eta_2$ respectively:
\begin{equation}
\hat{d}_j[i] = d_{12}[i]+ w_j[i] + \eta_j
%\hat{d}_2[i] =& d_{12}[i]+ N_{2}[i] + \eta_2 .
\end{equation}
for $j\in \{1,2\}$, where $w_j[i]$ is as given in \eqref{eq:legit_obs}.
We assume the amount of clock mismatch is a non-random, but unknown parameter, which remains constant througout the entire session\footnote{The underlying assumption is that, the clock mismatch variations are much slower than the duration of the key-generation sessions}. Since $w_j[i]$, $j \in\{1,2\}$ are zero mean random
variables,
\begin{equation}
\expect[\hat{d}_j[i]|\eta_1,\eta_2] = \expect[d_{12}[i]] + \eta_j .
% \expect[\hat{d}_2|\eta_{1},\eta_{2}] &= \expect[d_{12}] + \eta_2
\end{equation}
Hence, with the knowledge of the statistics of the mobility, each node
$j \in \{1,2\}$ can obtain a perfect estimate of the amount of clock mismatch
$\eta_j$ as $n \to \infty$, by simply calculating the difference between the long-term average of the distance observations $\hat{d}_j[i]$ for all $i$ and the known mean distance $\expect[d_{12}]$. Then this value can be broadcast in the public discussion phase.
Therefore, clock mismatch does not affect the theoretical bounds of secret key
generation rates.
%%%%%%%%%%%%%%%%%%%%%%%%%%%%%%%%%%%%%%%%%%%%%%%%%%%%%%%%%%%%%%%%%%%%%%%%%%%%%%%%%%%%%%%%%%
\subsection{Multipath Fading}
There may be a bias in the observations when the nodes experience multipath fading. An example
of this is time of arrival observation of distances when the nodes are not within their line of sight.
Note that, this kind of bias does not remain constant, and varies from one slot to the other. 
The impact of fading can be viewed as that of an additional observation noise source and the distance observations can be written as
\begin{align*}
\hat{d}_j[i]= d_{12}[i]+ w_j[i] + \eta_j[i]
\end{align*}
for $j\in \{1,2\}$.

Consequently, one will observe a reduction in the key rate. For example, with no angle observation at the eavesdropper, we know from Section~\ref{sec:beacon_power} that the key rate grows unboundedly with the beacon power $P$. However, with multipath fading, independent over different locations,
\begin{align*}
\lim_{P\to \infty} h(\hat{\dv}_{1}|\hat{\dv}_{2})&= h(\etav_1|\etav_2) \\
                                                 &\stackrel{(a)}{=} h(\etav_1)> -\infty, \vspace{-0.1in}
\end{align*}
where $\etav_1 = \{\eta_1[i]\}_{i=1}^n$, and $(a)$ follows since $\etav_1$ and $\etav_2$ are independent.
Hence, following an identical approach to Section~\ref{sec:beacon_power}, one see that $\lim_{P \to \infty} R_L < \infty$, i.e., the secret key rate remains bounded even as the power grows unboundedly.
%%%%%%%%%%%%%%%%%%%%%%%%%%%%%%%%%%%%%%%%%%%%%%

%%%%%%%%%%%%%%%%%%%%%%%%%%%%%%%%%%%%%%%%%%%%%%%%%%%%%%%%%%%%%%%%%%%%%%%%%%%%%%%%%%%%%%%%%%
\end{techreport}
\end{appendices}

%%%%%%%%%%%%%%%%%%%%%%%%%%%%%%%%%%%%%%%%%%%%%%%%%%%%%%%%%%%%%%%%%%%%%%%%%%%%%%
%%%%%%%%%%%%%%%%%%%%%%%%%%%%%%%%%%%%%%%%%%%%%%%%%%%%%%%%%%%%%%%%%%%%%%%%%%%%%%

%%%%%%%%%%%%%%%%%%%%%%%%%%%%%%%%%%%%%%%%%%%%%%%%%%%%%%%%%%%%%%%%%%%%%%%%%%%%%%%%%%%%%%%
%%%%%%%%%%%%%%%%%%%%%%%%%%%%%%%%%%%%%%%%%%%%%%%%%%%%%%%%%%%%%%%%%%%%%%%%%%%%%%%%%%%%%%%
%%%%%%%%%%%%%%%%%%%%%%%%%%%%%%%%%%%%%%%%%%%%%%%%%%%%%%%%%%%%%%%%%%%%%%%%%%%%%%%%%%%%%%%

\end{document}